\documentclass{article} % For LaTeX2e
\usepackage{iclr2027_conference,times}

\usepackage{booktabs}
\usepackage{amsmath}
\usepackage{amssymb}
\usepackage{mathtools}
\usepackage{amsthm}
\usepackage{wrapfig}
\usepackage[table]{xcolor}

\definecolor{oursbg}{RGB}{232,245,238}
\definecolor{sectionbg}{RGB}{242,245,248}
\definecolor{costbg}{RGB}{250,243,230}

\usepackage{algorithm}
\usepackage{algpseudocode}
\usepackage{float}
\usepackage{placeins}
\usepackage{needspace}

\providecolor{MAMGBlue}{HTML}{0072B2}
\providecolor{MAMGOrange}{HTML}{D55E00}
\providecolor{MAMGGreen}{HTML}{009E73}
\providecolor{MAMGPurple}{HTML}{AA4499}
\providecolor{MAMGInk}{HTML}{183B4A}
\providecolor{MAMGGray}{HTML}{5F6B70}

\newsavebox{\MoMPromptSave}
\newlength{\MoMPromptWidth}
\newenvironment{mompromptbox}[2][]{%
  \def\MoMPromptTitle{#1}%
  \def\MoMPromptColor{#2}%
  \par\medskip\noindent
  \begin{lrbox}{\MoMPromptSave}%
  \begin{minipage}{\MoMPromptWidth}%
  \colorbox{#2!10!white}{%
    \parbox{\dimexpr\MoMPromptWidth-2\fboxsep\relax}{%
      \color{MAMGInk}\normalfont\sffamily\small\bfseries #1}}%
  \par\smallskip
  \begingroup\ttfamily\footnotesize\raggedright
}{%
  \par\endgroup
  \end{minipage}%
  \end{lrbox}%
  \begingroup
  \setlength{\fboxrule}{0.55pt}%
  \fcolorbox{\MoMPromptColor!72!black}{\MoMPromptColor!3!white}{\usebox{\MoMPromptSave}}%
  \endgroup
  \par\medskip
}

\newcommand{\promptslot}[1]{{\color{MAMGPurple}\texttt{\{#1\}}}}
\newcommand{\promptcontract}[1]{{\color{MAMGOrange}#1}}
\newcommand{\promptcomment}[1]{{\color{MAMGGray}\normalfont\itshape #1}}
\newcommand{\promptbadge}[2]{%
  \begingroup
  \setlength{\fboxrule}{0.45pt}%
  \fcolorbox{#1!65!black}{#1!8!white}{%
    \color{MAMGInk}\sffamily\scriptsize\bfseries\strut #2}%
  \endgroup
}

\newtheorem{proposition}{Proposition}
\definecolor{MoMFailure}{HTML}{D97706}
\definecolor{MoMRepair}{HTML}{0072B2}
\definecolor{MoMRequest}{HTML}{E67E00}
\definecolor{MoMAssist}{HTML}{009DB2}
\definecolor{MoMHarm}{HTML}{C43C35}
\definecolor{MoMValue}{HTML}{00876C}
\definecolor{MoMMarket}{HTML}{00875A}

\newcommand{\momrequest}[1]{\textcolor{MoMRequest}{#1}}
\newcommand{\momassist}[1]{\textcolor{MoMAssist}{#1}}
\newcommand{\momharm}[1]{\textcolor{MoMHarm}{#1}}
\newcommand{\momvalue}[1]{\textcolor{MoMValue}{#1}}
\newcommand{\mommarket}[1]{\textcolor{MoMMarket}{#1}}

\usepackage{url}
\usepackage{hyperref}

\title{Markets of Minds: A 51-Billion-State Economy of Intelligence}

\author{%
  \textbf{Jusheng Zhang$^{1}$, Xuekun Wu$^{1}$, Wenhao Wang$^{2,\dagger}$, Mark Wang$^{2,\dagger}$, Ming-Hsuan Yang$^{3}$}\\
  $^{1}$Stanford University,
  $^{2}$Vast Intelligence Lab,
  $^{3}$University of California at Merced
}

\iclrfinalcopy

\begin{document}

\maketitle
% The ICLR style sets the conference header inside \maketitle.
% Clear it afterwards for the arXiv copy.
\fancyhead{}
\renewcommand{\headrulewidth}{0pt}

% Shared dagger footnote for the two corresponding authors.
\begingroup
\renewcommand{\thefootnote}{\fnsymbol{footnote}}
\footnotetext[2]{Co-corresponding authors:
  Wenhao Wang and
  Mark Wang.}
\endgroup
\vspace{-4mm}
\begin{abstract}
\vspace{-3mm}
When intelligence is shared, reasoning becomes a constrained allocation problem.
In a population of agents with limited access to expert models, an intervention can
benefit its recipient yet still be a poor use of shared capacity if another feasible
intervention has greater expected value. Assistance must therefore be evaluated
through both its marginal benefit and the opportunity cost induced by budgets,
access, scarcity, and service capacity. To this end, we introduce Markets of Minds,
a framework for allocating expert reasoning across agents before expert outputs
are observed. For each feasible agent--service pair, the framework
estimates the expected gain from assistance relative to the agent's default
continuation, explicitly separating repair benefit from regression risk, and jointly
clears assistance contracts under shared resource constraints. We then study this
problem in a dependency-aware agent economy with 51.2 billion addressable
persona--market states spanning capabilities, cognitive budgets, provider access,
scarcity, and network roles. These dependencies couple allocation feasibility
across agents rather than the outcome of a fixed contract. Empirically, in a matched
replay of 240 cases under a common 30\% assistance quota, the frozen allocator
reaches 85.42\% task-level persona adherence versus 73.86\% for equal-capacity
random allocation, a gain of 11.56 percentage points. The results show that
assistance value is heterogeneous and can be negative, and that scarcity
changes which interventions are worth purchasing. Together, these findings
motivate a population-level view of intelligence where the value of assistance
depends not only on what it can achieve, but also on which alternative uses of
scarce cognitive capacity it displaces.

\end{abstract}

\vspace{-4mm}
\section{Introduction}
\label{sec:introduction}
Intelligence can be abundant in capability yet scarce in access. Large language
models increasingly serve as agents that browse, plan, use tools, and act on
behalf of different users \citep{wu2023autogen,hong2023metagpt}. Expanding this population, however, does not
necessarily expand the cognitive services available to it. Access to stronger
reasoning, specialized expertise, long-context processing, or reliable
verification may remain bounded by cost, latency, concurrency, and provider
restrictions \citep{chen2023frugalgpt,ong2024routellm,feng2024graphrouter}. A system can therefore support many capable agents while lacking
the capacity to assist all of them when needed. This creates a population-level
question: \emph{when many agents need help, which requests should receive scarce
cognition?}

Need alone is not value under scarcity. Under shared capacity, assistance
decisions are coupled: assigning a service slot to one agent may displace an
intervention with greater expected benefit elsewhere, while spending resources
on one request reduces what remains available to others
\citep{mehta2007adwords,badanidiyuru2013bandits,zhang2022reusable}. The gain from help is
also uncertain and can be negative \citep{aggarwal2023automix,ong2024routellm}. An
agent may already have a sufficient answer, an expert may fail to repair its
difficulty, or additional advice may disrupt an otherwise correct solution. The
allocator must therefore distinguish the expected gain from assistance relative
to the agent's default continuation from the opportunity cost of committing
shared resources to it. A locally beneficial intervention can still be a poor
population-level allocation \citep{badanidiyuru2013bandits,zhang2022reusable}.
Crucially, this decision must be made before observing the corresponding expert
output.

We formalize this setting as \emph{multi-agent assistance under shared cognitive
budgets} (Section~\ref{sec:problem}). In a decision window, each request either
continues under its default policy or receives at most one feasible typed
service. The objective is to maximize expected downstream task performance
across the population subject to individual budgets, provider access, shared
resource limits, service capacity, and deadlines. These constraints determine
which combinations of interventions are feasible, so resources
committed to one agent restrict the choices available to others. Dependencies
across agents enter through budgets, access, scarcity, capacity, network roles,
and the feasible set; conditional on a fixed contract, we do not
model another agent's assignment as changing its task outcome.

We introduce \emph{Markets of Minds} (Section~\ref{sec:method}), a framework
that represents cognitive assistance as bounded, typed contracts and allocates
them before expert execution. For each feasible agent--service pair, the
framework estimates the marginal value of assistance relative to the agent's
default continuation. The formulation separates three quantities: the
probability that the default continuation already succeeds, the probability
that a service repairs a default failure, and the risk that assistance regresses
an otherwise successful outcome. A joint clearing step converts these values into contracts subject to budget, access, capacity, and
deadline constraints, accounting for competing uses of resources \citep{zhang2026uncertainty,zhou2026agora,alazraki2026sale}.
Only selected services are executed, and a common ledger records reservation,
consumption, and settlement. The problem is therefore not to select among
already generated responses, but to decide which responses are worth acquiring
before their realized value is known.

To study how economic context shapes this decision, we construct a
dependency-aware agent economy that crosses 999,847 base personas with
10 cognitive budgets, 20 expertise portfolios, 8 provider-access regimes,
4 scarcity regimes, and 8 network roles. The resulting design space contains
exactly \(51{,}192{,}166{,}400\) addressable persona--market states. 
A versioned sampling procedure draws auditable cohorts, and experiments
materialize only the states required for a given hypothesis. The construction
allows budgets, expertise, provider access, supply conditions, and network
roles to vary over the same underlying population. Economic context therefore
becomes an experimental variable rather than a fixed background assumption,
while persona demographics remain distinct from the capabilities and resources
assigned to an agent.

Our evaluation separates evidence about allocation from evidence about service
capability and population scale. The primary matched replay deliberately
instantiates a one-service setting with a fixed assistance quota, isolating the
question of \emph{who should receive help}. Across 240 interactive agent cases
with actor and judge models held fixed, all deployable policies are evaluated
against the same cached default and assisted outcomes under a common 30\%
assistance quota. A frozen allocator achieves 85.42\% task-level persona
adherence, compared with 73.86\% for equal-capacity random allocation, a gain
of 11.56 percentage points without increasing the number of assisted cases.
A complementary study across six multimodal benchmarks characterizes where
different services are useful and where their failures concentrate. These
experiments address distinct questions: recipient selection under fixed supply
and heterogeneity in service capability. The broader heterogeneous-service
clearing formulation defines the general framework, while empirical claims are
tied to the evaluated instantiations.
Together, the formulation, framework, and experiments support a population-level view of cognitive assistance under shared constraints. The key question is not simply whether expert help improves an agent's outcome, but whether it is the best use of scarce capacity relative to the alternatives it displaces. Intelligence is thus both an individual capability and a shared resource, whose value depends on both intervention gain and opportunity cost.
\vspace{-2mm}
\section{Problem Formulation}
\label{sec:problem}

We study how a population of task agents should share limited access to
cognitive services such as specialist reasoning, verification, and tool
support. Consider $n$ agents and $m$ services in a single decision window.
Agent $i$ has an \momrequest{observable state} $\momrequest{s_i}$, comprising
its task, execution history, and available resources. Let
$\momrequest{Y_i^0}\in\{0,1\}$ denote \momrequest{default success} and
$\momassist{Y_{ij}^1}\in\{0,1\}$ denote \momassist{assisted success} after
receiving service $j$. We assume that $\momrequest{s_i}$ is sufficient for
predicting these outcomes and that other agents' assignments affect feasibility
but do not change the outcome of a fixed contract. Thus cross-agent dependence
enters through shared budgets, access, scarcity, capacity, and the feasible
allocation set.

Allocation maximizes the expected \momvalue{assistance gain}:
\begin{equation}
    \momvalue{\Delta_{ij}}
    \triangleq \mathbb{E}[\momassist{Y_{ij}^1}-\momrequest{Y_i^0}
    \mid\momrequest{s_i}],
    \qquad
    \mommarket{x^\star}\in
    \arg\max_{\mommarket{x}\in\mommarket{\mathcal{X}}}
    \sum_{i,j}\mommarket{x_{ij}}\momvalue{\Delta_{ij}},
    \label{eq:assistance-delta}
\end{equation}
where the \mommarket{assignment} $\mommarket{x_{ij}}$ indicates whether agent
$i$ receives service $j$, and the \mommarket{feasible set}
$\mommarket{\mathcal{X}}$ encodes capacity, budget, and access constraints with
at most one service per request. Allocation must precede observing the assisted
response and either outcome label. Section~\ref{sec:method} instantiates this
objective using observable request features
(Equation~\eqref{eq:demand-representation}) and estimated contract values
(Equation~\eqref{eq:estimated-value}).

\noindent\textbf{Assistance value separates need from service effect.}
A low probability of default success indicates that an agent may need help,
but does not establish that an available service can repair its failure.
Assistance can also disrupt an otherwise successful continuation. For binary
outcomes, these two effects admit an exact decomposition.

\begin{proposition}[Assistance value]
\label{prop:value-decomposition}
The expected value of an assistance contract factorizes into default-state
occupancy and the signed transition induced by the service:
\vspace{-2mm}
\begin{equation}
    \vspace{-2mm}
    \momvalue{\Delta_{ij}}
    =
    \begin{bmatrix}\momrequest{1-p_i}&\momrequest{p_i}\end{bmatrix}
    \begin{bmatrix}\momassist{r_{ij}}\\-\momharm{h_{ij}}\end{bmatrix}
    =
    \momrequest{(1-p_i)}\momassist{r_{ij}}
    -\momrequest{p_i}\momharm{h_{ij}},
    \label{eq:value-decomposition}
\end{equation}
    \vspace{-3mm}
\end{proposition}

Here, $\momrequest{p_i}=\Pr(\momrequest{Y_i^0}=1\mid\momrequest{s_i})$ is
\momrequest{default success}, $\momassist{r_{ij}}=\Pr(\momassist{Y_{ij}^1}=1\mid
\momrequest{Y_i^0}=0,\momrequest{s_i})$ is conditional
\momassist{repairability}, and
$\momharm{h_{ij}}=\Pr(\momassist{Y_{ij}^1}=0\mid
\momrequest{Y_i^0}=1,\momrequest{s_i})$ is conditional
\momharm{regression risk}. The identity follows by conditioning on
$\momrequest{Y_i^0}$:
$\mathbb{E}[\momassist{Y_{ij}^1}\mid\momrequest{s_i}]
=\momrequest{(1-p_i)}\momassist{r_{ij}}
+\momrequest{p_i}(1-\momharm{h_{ij}})$.
The $1\times2$ row gives \momrequest{default-state occupancy}; the $2\times1$
column gives the signed \momassist{repair} and \momharm{regression} effects.
The key distinction is therefore between \emph{needing help} and
\emph{benefiting from the available help}. A complete proof appears in
Appendix~\ref{app:problem-proofs}.

\begin{figure}[!t]
    \centering
    \includegraphics[width=1\linewidth]{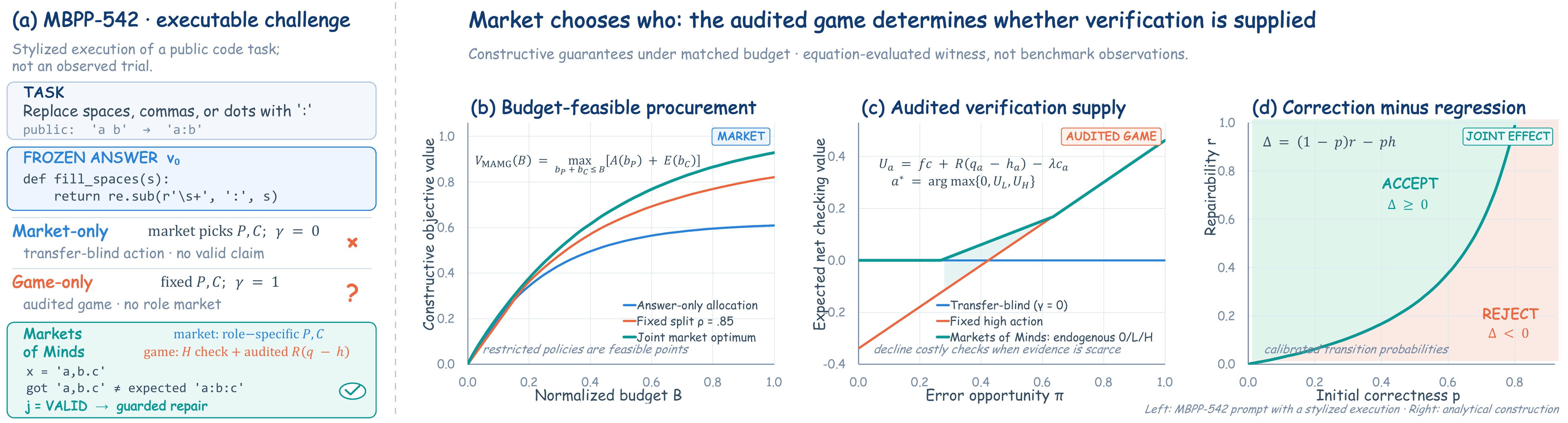}
    \vspace{-6mm}
    \caption{\textbf{Allocation and verification.}
    A stylized execution and analytical illustrations of procurement,
    verification supply, and net correction.}
    \label{fig:allocation-verification}
    \vspace{-4mm}
\end{figure}

Fig.~\ref{fig:allocation-verification} combines executable checking~(a),
budget-feasible procurement~(b), verification supply~(c), and net
correction~(d) in Equation~\eqref{eq:value-decomposition}. These constructions
are analytical illustrations rather than benchmark observations, and our
allocation model does not assume the depicted audited game. Panel~(d)'s accept
region means non-negative gain, not guaranteed selection. At fixed regression
risk $h$ and $p<1$, its boundary is $r=ph/(1-p)$.

\noindent\textbf{Difficulty alone can misallocate assistance.}
Consider requests with $(p_i,r_{ij},h_{ij})=(0.1,0,0)$ and $(0.6,1,0)$. The
first is more likely to fail, yet has assistance value zero; the second has
value $0.4$. With one slot, a difficulty-first rule selects the irreparable
case. Conversely, even a capable service can have negative value when
\momharm{regression} outweighs \momassist{repair}. Section~\ref{sec:method}
therefore estimates service effects before forming
Equation~\eqref{eq:estimated-value}, rather than treating difficulty itself as
assistance value.

\noindent\textbf{Population welfare depends on value estimation.}
Expected population reward is
$W(\mommarket{x})=\sum_i\momrequest{p_i}
+\sum_{i,j}\mommarket{x_{ij}}\momvalue{\Delta_{ij}}$: a fixed default baseline
plus the reward added by allocation. For one service and exactly $K$
assignments, the optimum selects the $K$ largest $\momvalue{\Delta_i}$ values;
when $K$ is only an upper bound, non-positive values need not be selected.
Because these contract values are unknown before clearing, errors in value
estimation can translate directly into errors in population allocation.

\begin{proposition}[Allocation error]
\label{prop:allocation-error}
Let $\mommarket{\widehat{x}}$ maximize
$\sum_{ij}\mommarket{x_{ij}}\momvalue{\widehat{\Delta}_{ij}}$ over the
feasible set of $\mommarket{x^\star}$. If allocations contain at most $K$
contracts and
$|\momvalue{\widehat{\Delta}_{ij}}-\momvalue{\Delta_{ij}}|\leq\momharm{\epsilon}$
for all feasible pairs, then
\begin{equation}
    W(\mommarket{x^\star})-W(\mommarket{\widehat{x}})
    \leq
    \sum_{i,j}(\mommarket{x^\star_{ij}}+\mommarket{\widehat{x}_{ij}})
    |\momvalue{\widehat{\Delta}_{ij}}-\momvalue{\Delta_{ij}}|
    \leq 2K\momharm{\epsilon}.
    \label{eq:regret-bound}
\end{equation}
    \vspace{-6mm}
\end{proposition}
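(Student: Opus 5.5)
The plan is the standard plug-in optimization argument, using the welfare form $W(x)=\sum_i p_i+\sum_{i,j}x_{ij}\Delta_{ij}$ stated just before Proposition~\ref{prop:allocation-error}. The default baseline $\sum_i p_i$ does not depend on the allocation, so it cancels in the difference. That leaves
\begin{equation*}
W(x^\star)-W(\widehat{x})=\sum_{i,j}x^\star_{ij}\Delta_{ij}-\sum_{i,j}\widehat{x}_{ij}\Delta_{ij}.
\end{equation*}
Next I add and subtract the estimated objective at both allocations:
\begin{equation*}
W(x^\star)-W(\widehat{x})=\sum_{i,j}x^\star_{ij}(\Delta_{ij}-\widehat{\Delta}_{ij})+\Big[\sum_{i,j}x^\star_{ij}\widehat{\Delta}_{ij}-\sum_{i,j}\widehat{x}_{ij}\widehat{\Delta}_{ij}\Big]+\sum_{i,j}\widehat{x}_{ij}(\widehat{\Delta}_{ij}-\Delta_{ij}).
\end{equation*}

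The key step is to drop the bracketed middle term. By hypothesis, $\widehat{x}$ maximizes $\sum_{ij}x_{ij}\widehat{\Delta}_{ij}$ over the same feasible set $\mathcal{X}$ that contains $x^\star$. Hence the estimated objective at $x^\star$ is at most its value at $\widehat{x}$, and the bracket is non-positive. The two remaining sums are bounded termwise by absolute values. Since $x_{ij}\in\{0,1\}$, so that $x_{ij}\ge 0$, this gives
\begin{equation*}
W(x^\star)-W(\widehat{x})\le\sum_{i,j}(x^\star_{ij}+\widehat{x}_{ij})\,|\widehat{\Delta}_{ij}-\Delta_{ij}|,
\end{equation*}
which is the first inequality in Equation~\eqref{eq:regret-bound}.

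For the second inequality, every pair with $x^\star_{ij}=1$ or $\widehat{x}_{ij}=1$ belongs to a feasible allocation. Such a pair is therefore a feasible pair, and the uniform bound $|\widehat{\Delta}_{ij}-\Delta_{ij}|\le\epsilon$ applies to it. The sum is then at most $\epsilon\big(\sum_{ij}x^\star_{ij}+\sum_{ij}\widehat{x}_{ij}\big)$. Because each allocation contains at most $K$ contracts, this is at most $2K\epsilon$.

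No step is genuinely hard; the only care needed is bookkeeping. First, the uniform error bound must only be applied to pairs that actually appear in some feasible allocation, which is exactly what "for all feasible pairs" provides. Second, the optimality comparison requires $x^\star$ and $\widehat{x}$ to be optimized over the same set $\mathcal{X}$. Both conditions are built into the hypotheses, so I expect the argument to go through in a few lines.
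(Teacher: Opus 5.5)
Your proposal is correct and matches the paper's proof step for step: cancel the allocation-independent baseline $\sum_i p_i$, add and subtract the estimated objectives, drop the bracketed term by optimality of $\widehat{x}$ over the common feasible set, and bound each of the two remaining sums by $K\epsilon$. Your two bookkeeping remarks are sound: the first uses nonnegativity of $x_{ij}$, and the second applies the uniform bound only to pairs that appear in some feasible allocation. They make explicit what the paper leaves implicit.
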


The first inequality follows by inserting the predicted objectives and using
optimality of $\mommarket{\widehat{x}}$. Each allocation contains at most $K$
contracts; the \momharm{uniform estimation error} $\momharm{\epsilon}$ gives
the second inequality. For an exact top-$K$ quota with $1\leq K<n$, the selected
set also matches the oracle whenever
$\Delta_{(K)}-\Delta_{(K+1)}>2\epsilon$, with values ordered from largest to
smallest. This is a conditional guarantee: average predictive calibration alone
does not establish the required uniform-error premise. The complete argument,
including the top-$K$ recovery condition, appears in
Appendix~\ref{app:problem-proofs}.

Together, Propositions~\ref{prop:value-decomposition} and
\ref{prop:allocation-error} connect individual assistance effects to
population-level allocation. The first identifies what makes an intervention
valuable; the second shows why estimating that value accurately matters once
multiple agents compete for the same cognitive capacity.

\FloatBarrier

\vspace{-2mm}
\section{Markets of Minds}
\label{sec:method}

The formulation above separates two requirements for allocating scarce
cognition. Proposition~\ref{prop:value-decomposition} identifies what must be
estimated for each feasible contract, \textit{i.e.}, \momrequest{default success},
conditional \momassist{repair}, and \momharm{regression risk}, while
Proposition~\ref{prop:allocation-error} explains why errors in these values
matter once agents compete for shared capacity. \emph{Markets of Minds}
turns this decomposition into a pre-execution allocation pipeline: represent
observable cognitive demand, estimate prospective assistance value, clear
contracts jointly under shared constraints, and execute only the selected
assistance. Fig.~\ref{fig:mom-overview} summarizes this process.

\begin{figure}[!t]
    \centering
    \includegraphics[width=\linewidth]{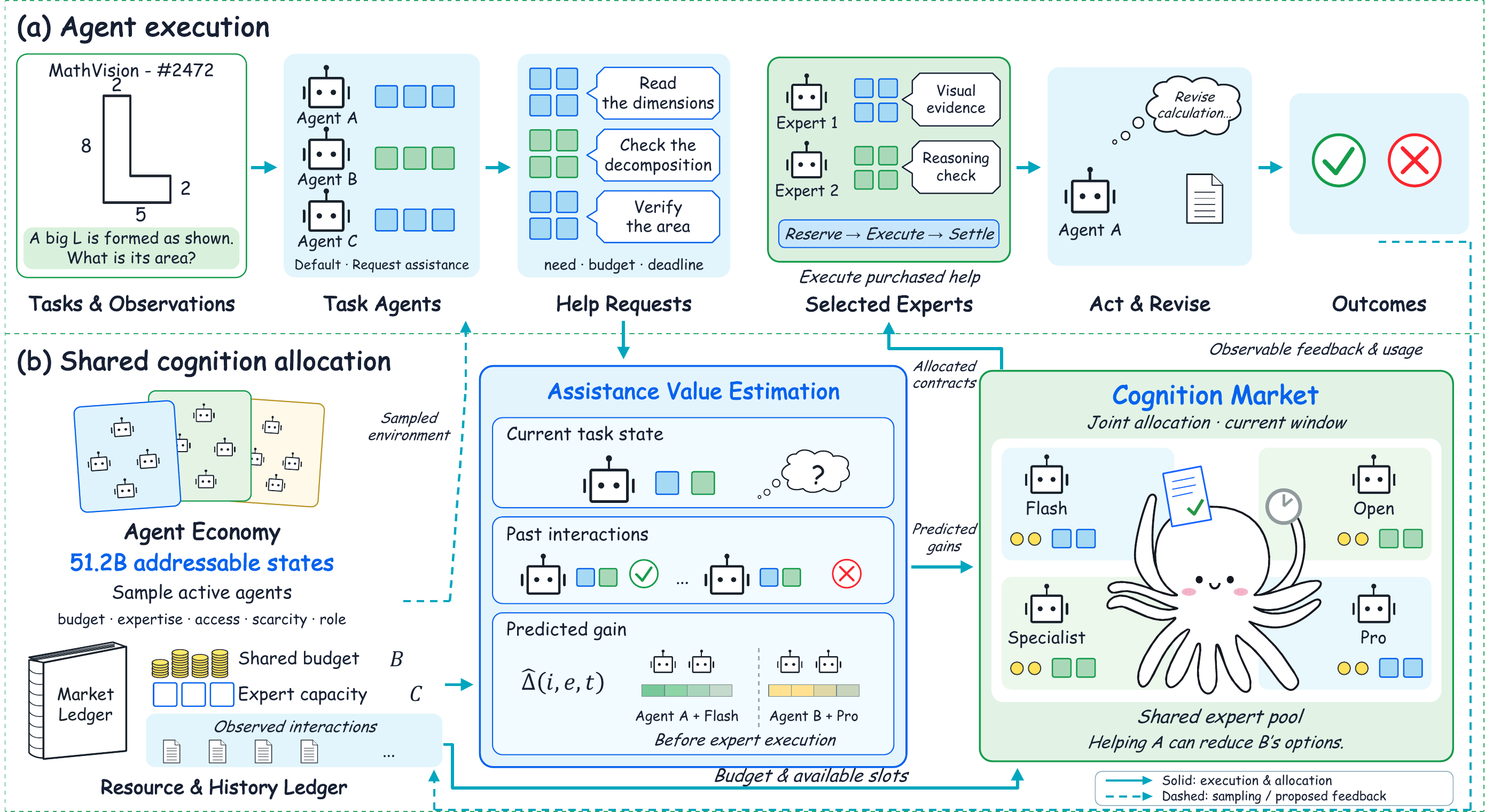}
    \vspace{-6mm}\caption{\textbf{Markets of Minds.} (a) Task-agent execution. (b) Pre-execution
    value estimation, joint allocation, and resource accounting.}
    \label{fig:mom-overview}
    \vspace{-4mm}
\end{figure}

The \momvalue{value-estimation} block implements
Equations~\eqref{eq:demand-representation} and~\eqref{eq:estimated-value}; the
\mommarket{market} clears contracts under the feasible set of
Section~\ref{sec:problem}. In the diagram, $\momvalue{\widehat\Delta(i,e,t)}$
denotes $\momvalue{\widehat\Delta_{ij}}$ with expert $e=j$ and the window $t$
fixed. The ledger supplies the budgets and capacities available at clearing
time. Reservation, execution, and settlement occur only after the allocation
has been fixed, so unobserved expert outputs cannot influence the decision that
selects them.

\noindent\textbf{\momrequest{Represent observable cognitive demand.}}
Agent $i$ submits a \momrequest{demand packet}
$\momrequest{D_i}=(q_i,o_i,\theta_i,g_i,b_i,\mathcal{A}_i,d_i)$ containing its
task, observable base artifact, task-relevant persona context, desired behavior
or missing evidence, remaining credits, accessible services, and deadline.
The packet excludes both the target outcome and any unexecuted service response,
and therefore provides a structured view of the \momrequest{observable state}
$\momrequest{s_i}$ introduced in Section~\ref{sec:problem}. A demand auditor
maps it to
\begin{equation}
    \momrequest{\mathbf{a}_i}
    = (u_i^{\mathrm{per}},u_i^{\mathrm{sem}},u_i^{\mathrm{inf}},\kappa_i,m_i)^\top,
    \qquad
    \mommarket{\mathbf{z}_i}
    = [\momrequest{\mathbf{a}_i};\,\mommarket{\boldsymbol{\rho}_i}],
    \qquad
    \momassist{\mathbf{z}_{ij}}
    = [\mommarket{\mathbf{z}_i};\,\momassist{\mathbf{e}_j}].
    \label{eq:demand-representation}
\end{equation}
The five \momrequest{demand features} in $\momrequest{\mathbf a_i}$ identify
possible failures in perception, interpretation, inference, preference
resolution, and execution; they are observable features rather than outcome
labels. The \mommarket{request representation} $\mommarket{\mathbf z_i}$ adds
\mommarket{historical statistics} $\mommarket{\boldsymbol\rho_i}$ from similar
development requests. The \momassist{request--service representation}
$\momassist{\mathbf z_{ij}}$ further adds the \momassist{service descriptor}
$\momassist{\mathbf e_j}$ so that the predicted effect can depend on which
service is considered. Here $[\cdot;\cdot]$ denotes column concatenation; the
three vectors have dimensions $5$, $d_{\mathrm{req}}$, and $d$, respectively.
With one service, $\momassist{\mathbf e_j}$ is constant across requests and
does not affect their ranking.

\noindent\textbf{\momvalue{Estimate prospective assistance value.}}
The relevant quantity is not whether a request appears difficult, but whether
a particular service is expected to improve its outcome relative to the
default continuation. On paired development cases, we can observe both the
default and assisted continuations: cases with $\momrequest{Y_i^0}=0$ reveal
\momassist{repair}, and cases with $\momrequest{Y_i^0}=1$ reveal
\momharm{regression}. The full component-wise estimator fits
$\momrequest{\widehat p_i}$ on all development cases using request features
$\mommarket{\mathbf z_i}$, then fits $\momassist{\widehat r_{ij}}$ on default
failures and $\momharm{\widehat h_{ij}}$ on default successes using
request--service features $\momassist{\mathbf z_{ij}}$. Each head minimizes
binary cross-entropy on its own outcome stratum; regularization and group
smoothing are selected only on development folds.

The frozen estimates instantiate the decomposition in
Equation~\eqref{eq:value-decomposition} and the objective in
Equation~\eqref{eq:assistance-delta} through one plug-in decision:
\begin{equation}
\vspace{-3mm}
    \momvalue{\widehat{\Delta}_{ij}}
    = \momrequest{(1-\widehat p_i)}\momassist{\widehat r_{ij}}
      -\momrequest{\widehat p_i}\momharm{\widehat h_{ij}},
    \qquad
    \mommarket{\widehat x}\in
    \arg\max_{\mommarket{x}\in\mommarket{\mathcal X}}
    \sum_{i,j}\mommarket{x_{ij}}\momvalue{\widehat\Delta_{ij}}.
    \label{eq:estimated-value}
\end{equation}
The \momrequest{default-success estimate} $\momrequest{\widehat p_i}$ weights
the two initial states; \momassist{repairability}
$\momassist{\widehat r_{ij}}$ and \momharm{regression risk}
$\momharm{\widehat h_{ij}}$ give their conditional transition effects. Their
weighted difference is the \momvalue{estimated contract gain}
$\momvalue{\widehat\Delta_{ij}}$.
\textbf{\mommarket{Clear contracts under shared constraints.}}
The second part of Equation~\eqref{eq:estimated-value} converts these local
contract values into a population-level allocation. The
\mommarket{feasible set} $\mommarket{\mathcal X}$ imposes one service per
request, buyer budgets, supplier capacities, shared resource limits including
broker overhead, access permissions, and deadlines. Consequently, a contract
can have positive estimated value and still remain unselected when another
feasible use of the same capacity has greater value. This is precisely the
opportunity-cost effect represented by the population objective in
Section~\ref{sec:problem}. Proposition~\ref{prop:allocation-error} connects
errors in these estimated values to the resulting allocation loss.
Algorithm~\ref{alg:mom} separates development fitting from target inference and executes only cleared contracts.

\noindent\textbf{Evaluated instantiation.}
The matched replay deliberately isolates the recipient-selection
problem. It has one service and exactly $K_e$ slots per environment. The
auditor also supplies default-success and repair-gain estimates
$\widetilde p_i$ and $\widetilde g_i$. The implemented calibrator fits
$v_i=f_\beta(\mathbf a_i,\widetilde p_i,\widetilde g_i,\boldsymbol\rho_i)
\approx\Pr(Y_i^0=0,Y_i^1=1\mid D_i)$ on development cases; clearing then selects
the top $K_e$ requests by $v_i$. This estimates the positive-effect term of
Equation~\eqref{eq:value-decomposition}; it does not separately estimate
regression risk. The current result therefore tests \emph{demand-aware
recipient selection under fixed assistance supply}. The three-head estimator and heterogeneous-service clearing in
Algorithm~\ref{alg:mom} define the general framework, while the experiments
instantiate it in the one-service recipient-selection setting.

\noindent\textbf{\mommarket{Execute and account only after clearing.}}
Only after the allocation is frozen do selected agents receive the service;
unselected agents continue by default. Matched replay uses the same
trajectories for every policy and joins target outcomes after selection. This
separation ensures that the allocator cannot use the realized assisted outcome
to decide whether assistance should have been purchased. The ledger records
outcomes, credits, resource use, latency, failures, and API-equivalent cost.
In Algorithm~\ref{alg:mom}, $\widehat{\mathcal C}$ is the purchased contract
set, $\mathcal O$ contains observed execution results and usage, and
$\mathcal L^-,\mathcal L^+$ are the reserved and settled ledger states. The
dashed feedback path in Fig.~\ref{fig:mom-overview} indicates possible
future reuse of logged interactions; it is not updated on target-window outcomes.

\vspace{-2mm}
\section{Experiments}
\label{sec:experiments}

We evaluate the empirical questions induced by
Sections~\ref{sec:problem}--\ref{sec:method}. We first test whether a frozen
pre-execution allocator improves recipient selection when every policy receives
the same expert capacity. We then examine whether the resulting gains persist
across heterogeneous multimodal tasks and expert-use budgets, followed by an
explicit shared-capacity population setting. Finally, we ablate the signals used
by the implemented allocator. These experiments evaluate the instantiated allocation policies in
Section~\ref{sec:method}, with the matched replay focusing on one-service
recipient selection and the remaining studies examining task heterogeneity
and shared-capacity allocation.

\subsection{Matched Agent Allocation}
\label{sec:matched-agent-allocation}

\textbf{Protocol.}
We first evaluate the one-service recipient-selection setting described in
Section~\ref{sec:method}. The matched replay contains 240 interactive agent
cases, with 80 cases each from Chat, Web, and OS-App. Every case has cached
default and assisted potential outcomes that are shared across allocation
policies. GPT-5.6 Sol serves as the actor and Claude Opus 4.8 as the judge.
Only the Survey split is used for development. Within each environment, an
assistance policy may allocate at most 24 expert interventions, yielding 72
interventions in total and a common 30\% expert-capacity limit. Thus all
policies operate on the same cases, expert supply, and evaluation outcomes.
Overall denotes task-level persona adherence.

\begin{table}[t]
\caption{\textbf{Matched allocation at 30\% shared expert capacity.}
80 cases and 24 assistance slots per environment; cost/success includes
method-specific overheads.}
\vspace{1mm}
\label{tab:matched_agent_allocation}
\centering
\small
\setlength{\tabcolsep}{1.8pt}
\renewcommand{\arraystretch}{1.05}
\begin{tabular*}{\linewidth}{@{\extracolsep{\fill}}lrrrrrrr@{}}
\toprule
Method
& Chat
& Web
& OS-App
& \shortstack{Overall $\uparrow$}
& \shortstack{Expert use}
& \shortstack{Cost/succ. $\downarrow$}
& \shortstack{Harmful help $\downarrow$} \\
\midrule
No assistance
& 68.75 & 61.25 & 65.00 & 65.00
& 0\% & \$0.06211 & -- \\

Random
& 77.40 & 70.59 & 73.60 & 73.86
& 30\% & \$0.06090 & 0.42\% \\

Difficulty/uncertainty
& 83.75 & 76.25 & 85.00 & 81.67
& 30\% & \$0.05636 & 0.00\% \\

AutoMix
& 82.50 & 76.25 & 81.25 & 80.00
& 30\% & \$0.07765 & 0.00\% \\

RouteLLM
& 86.25 & 77.50 & 80.00 & 81.25
& 30\% & \$0.06219 & 0.00\% \\

GraphRouter
& 86.25 & 77.50 & 80.00 & 81.25
& 30\% & \$0.06023 & 0.00\% \\

Agora
& 86.25 & \textbf{80.00} & 82.50 & 82.92
& 30\% & \$0.05778 & \textbf{0.00\%} \\

\midrule
\textbf{Ours}
& \textbf{91.25}
& 78.75
& \textbf{86.25}
& \textbf{85.42}
& \textbf{30\%}
& \textbf{\$0.05509}
& \textbf{0.00\%} \\
\bottomrule
\end{tabular*}
\vspace{-4mm}
\end{table}

\noindent\textbf{Observation 1: Recipient selection matters even when
assistance supply is fixed.}
Table~\ref{tab:matched_agent_allocation} shows that Markets of Minds reaches
\textbf{85.42} overall under the same 30\% expert-capacity limit used by all
assistance policies. Equal-capacity random allocation reaches 73.86, giving an
\textbf{11.56-point} difference, while the strongest comparison policy, Agora,
reaches 82.92, leaving a \textbf{2.50-point} gap. Markets of Minds is highest
on Chat and OS-App, whereas Agora is higher on Web, indicating that the value
of recipient selection varies across environments. Because the cases, cached
potential outcomes, and expert supply are matched across policies, this
comparison isolates allocation quality rather than additional expert use.
The same table also shows that higher task performance does not require higher
realized expert consumption: all assistance policies use 30\% capacity.
Markets of Minds obtains the lowest cost per successful task
(\textbf{\$0.05509}) and records no harmful-help events in this replay, tied
with several comparison policies on the latter diagnostic.
\begin{figure*}[t]
\centering
\includegraphics[width=1\textwidth]{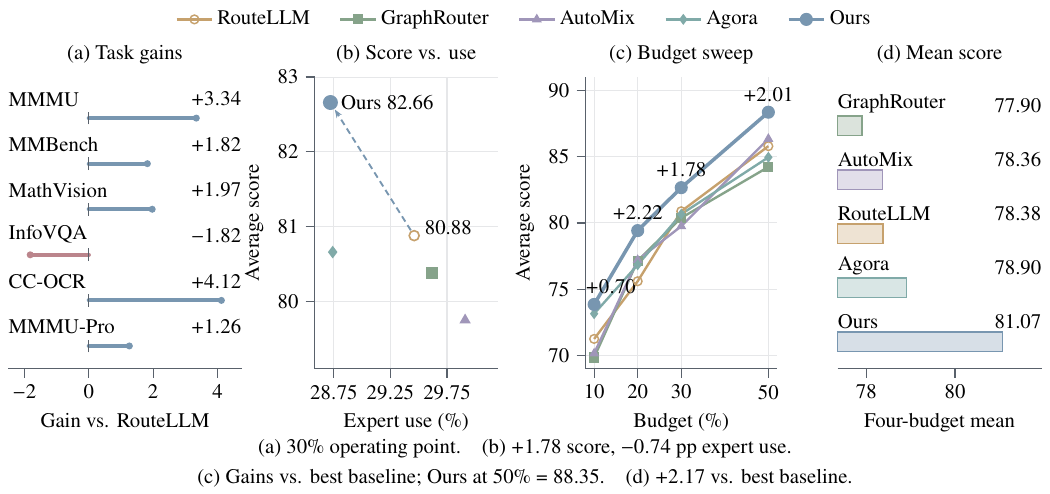}
\vspace{-6mm}
\caption{\textbf{Main routing results and budget robustness.}
Task-level gains, score versus expert use, performance across four expert-use
budgets, and the corresponding four-budget mean.}
\label{fig:experiments-overview}
\vspace{-4mm}
\end{figure*}

\subsection{Multimodal Benchmark Results}
\label{sec:main-benchmark-results}

\begin{table}[t]
\caption{\textbf{Multimodal benchmark results.}
Routing entries report three-run means; \textbf{bold} marks the best routing
result in each column.}
\vspace{1mm}
\label{tab:main-results}
\centering
\small
\setlength{\tabcolsep}{1.8pt}
\renewcommand{\arraystretch}{1.05}
\begin{tabular*}{\linewidth}{@{\extracolsep{\fill}}lrrrrrrrr@{}}
\toprule
Method &
MMMU &
MMBench &
\shortstack{Math Vision} &
InfoVQA &
CC-OCR &
MMMU-Pro &
Avg. &
\shortstack{Expert use} \\
\midrule
\multicolumn{9}{l}{\textit{Single-model references}} \\
Gemini 3.5 Flash
& 76.11 & 92.99 & 86.51 & 78.90 & 71.42 & 78.03 & 80.66 & --- \\
Gemini-2.5-Pro
& 81.70 & 88.30 & 63.50 & 81.00 & 73.00 & 68.00 & 75.92 & --- \\
Gemini-2.0-Flash
& 70.70 & 83.00 & 41.30 & 83.20 & 73.10 & 57.00 & 68.05 & --- \\
GPT-5.6 Sol
& 82.22 & 83.90 & 84.54 & 73.58 & 79.62 & 79.77 & 80.60 & --- \\
GPT-4o (2024-08-06)
& 70.70 & 74.30 & 30.40 & 68.70 & 66.60 & 54.50 & 60.87 & --- \\
Claude Sonnet 5
& 65.00 & 83.90 & 74.67 & 62.31 & 74.26 & 69.94 & 71.68 & --- \\
InternVL3-78B
& 72.20 & 87.70 & 43.10 & 84.10 & 80.30 & 55.10 & 70.42 & --- \\
Qwen2.5-VL-72B
& 70.20 & 88.40 & 38.10 & 87.30 & 79.80 & 51.10 & 69.15 & --- \\
\midrule
\multicolumn{9}{l}{\textit{Routing methods (three-run means)}} \\
RouteLLM
& 74.44 & 91.69 & 84.38 & \textbf{78.84}
& 76.73 & 79.19 & 80.88 & 29.46\% \\
GraphRouter
& 74.12 & 91.45 & 84.05 & 78.15
& 76.10 & 78.43 & 80.38 & 29.62\% \\
AutoMix
& 73.55 & 90.72 & 83.81 & 77.64
& 74.98 & 77.82 & 79.75 & 29.91\% \\
Agora
& \textbf{78.33} & 92.73 & 86.02 & 76.87
& 73.14 & 76.88 & 80.66 & 28.74\% \\
\textbf{Ours}
& 77.78 & \textbf{93.51} & \textbf{86.35} & 77.02
& \textbf{80.85} & \textbf{80.45} & \textbf{82.66}
& \textbf{28.72\%} \\
\bottomrule
\end{tabular*}
\vspace{-4mm}
\end{table}

\textbf{Benchmarks and metrics.}
We complement the matched agent replay with a six-benchmark multimodal study
using MMMU~\citep{yue2024mmmu},
MMBench~\citep{liu2024mmbench},
MATH-Vision~\citep{wang2024mathvision},
InfoVQA~\citep{mathew2022infographicvqa},
CC-OCR~\citep{yang2025ccocr}, and
MMMU-Pro~\citep{yue2025mmmupro}.
We evaluate on the full designated evaluation splits and report
benchmark-specific scores on a 0--100 scale together with their unweighted
mean (\textit{Avg.}). Table~\ref{tab:main-results} reports three-run means and
realized expert use, while Fig.~\ref{fig:experiments-overview} summarizes
task-level differences and performance across expert-use budgets.

\textbf{Models and comparisons.}
Gemini 3.5 Flash~\citep{google2026gemini35flash} provides the base response.
The assistance pool comprises Gemini-2.5-Pro~\citep{google2024gemini25},
GPT-5.6 Sol~\citep{openai2026gpt56sol}, Claude Sonnet
5~\citep{anthropic2026sonnet5}, InternVL3-78B~\citep{InternVL3}, and
Qwen2.5-VL-72B. We compare Markets of Minds with matched adaptations of
RouteLLM~\citep{ong2024routellm},
GraphRouter~\citep{feng2024graphrouter},
AutoMix~\citep{aggarwal2023automix}, and
Agora~\citep{zhang2026agora} under a common assistance interface.
Standalone models are reported separately as performance references rather
than resource-matched allocation baselines.

\textbf{Allocation and reporting.}
Each request may receive at most one expert intervention; unselected requests
retain their base response. All routing policies share the same response pool,
eligible requests, and resource constraints, with a nominal assistance rate of
30\%. Allocation decisions are fixed before the corresponding evaluation
outcomes are used for scoring. Expert use is the realized percentage of
evaluated requests assigned to an expert, averaged over three runs.

\noindent\textbf{Observation 2: Assistance value is heterogeneous across
tasks.}
At the 30\% operating point, Table~\ref{tab:main-results} shows that Markets of
Minds reaches \textbf{82.66} average score with \textbf{28.72\%} expert use,
compared with 80.88 and 29.46\% for RouteLLM, the highest-scoring comparison
router by average score. The difference is therefore
\textbf{+1.78 points} while using \textbf{0.74 percentage points fewer}
expert calls. The gain is not uniform across benchmarks: Markets of Minds has
the highest routing score on MMBench, MATH-Vision, CC-OCR, and MMMU-Pro,
whereas Agora is higher on MMMU and RouteLLM is higher on InfoVQA. This
variation is consistent with the formulation in Section~\ref{sec:problem}:
expert capability alone does not determine the value of assigning that expert
to a particular request.

\noindent\textbf{Observation 3: The allocation advantage persists as expert
supply changes.}
Fig.~\ref{fig:experiments-overview} shows that the difference is not confined
to the nominal 30\% operating point. Relative to the strongest comparison at
each budget, Markets of Minds gains +0.70, +2.22, +1.78, and +2.01 points at
10\%, 20\%, 30\%, and 50\% expert-use budgets, respectively. Averaged across
the four budgets, Markets of Minds reaches \textbf{81.07}, compared with
78.90 for the strongest comparison average, a difference of
\textbf{+2.17 points}. Together with the task-level variation in
Table~\ref{tab:main-results}, the sweep indicates that which interventions are
valuable depends on both the request and the amount of assistance available.

\subsection{Population-Level Allocation under Shared Capacity}
\label{sec:population-level-allocation}

\begin{figure*}[t]
\centering
\includegraphics[width=0.99\textwidth]{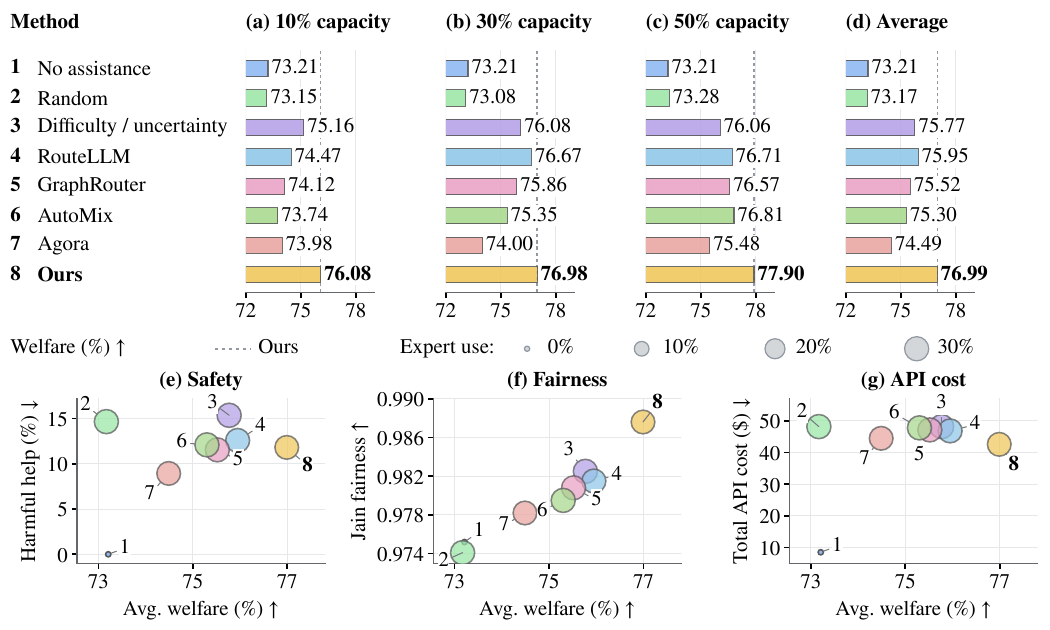}
\vspace{-4mm}
\caption{\textbf{Population-level allocation under shared capacity.}
The bar charts show welfare under 10\%, 30\%, and 50\% global capacity and
their average; the bubble charts show welfare relative to harmful help,
Jain fairness, and total API cost.}
\label{fig:population-allocation}
\vspace{-4mm}
\end{figure*}

The previous experiments evaluate recipient selection and budget sensitivity.
We next expose the population coupling emphasized in
Section~\ref{sec:problem} by imposing a single expert-capacity constraint
shared across requests. Global capacity is capped at 10\%, 30\%, or 50\% of
request volume. Fig.~\ref{fig:population-allocation} reports population
welfare together with three ex-post deployment diagnostics:
\emph{Harmful Help}, the frequency with which expert intervention degrades a
correct base response; Jain fairness across task groups; and total
API-equivalent cost. Harmful Help measures realized regression after execution.

\noindent\textbf{Observation 4: Scarcity changes the value of allocation
decisions.}
Across the three capacity levels in
Fig.~\ref{fig:population-allocation}, Markets of Minds records welfare of
\textbf{76.08}, \textbf{76.98}, and \textbf{77.90} at 10\%, 30\%, and 50\%
capacity, respectively, the highest values among the plotted allocation
methods in each setting. Its three-budget average is \textbf{76.99}, compared
with 75.95 for RouteLLM. The difference relative to RouteLLM is largest at
10\% capacity, where Markets of Minds reaches 76.08 versus 74.47. As capacity
tightens, committing an expert slot to one request excludes more alternative
uses of that same resource, making accurate contract ranking increasingly
consequential.

\noindent\textbf{Observation 5: Welfare gains coexist with distinct safety,
fairness, and cost trade-offs.}
The right-hand panels of Fig.~\ref{fig:population-allocation} show that
average welfare alone does not characterize an allocation policy. Markets of
Minds lies in the high-fairness region and has the highest plotted Jain
fairness. Its harmful-help rate is lower than the difficulty/uncertainty
heuristic, while Agora attains a lower harmful-help rate at lower average
welfare. The API-cost panel similarly places Markets of Minds at higher
welfare and lower total cost than several learned routing baselines. These
comparisons are consistent with the opportunity-cost interpretation of
Section~\ref{sec:problem}: the quality of an allocation depends not only on
whether an intervention helps locally, but also on the competing use of the
shared capacity it consumes.

\subsection{Ablation Study}
\label{sec:ablation}

We finally examine which signals used by the implemented calibrator contribute
to recipient selection. All variants use the same actor--expert pool, full
target evaluation sets, and 30\% shared expert capacity. For each variant, one
component is removed, the allocator is retrained using only the Survey
development split, and the resulting policy is frozen before target
evaluation. Table~\ref{tab:component_ablation} reports target success, its
difference from the full model, cost per successful task, and development
Survey performance. These ablations concern the implemented inputs
$(\mathbf a_i,\widetilde p_i,\widetilde g_i,\boldsymbol\rho_i)$ described in
Section~\ref{sec:method}.

\begin{table}[t]
\caption{\textbf{Cognition allocator ablation.}
Three-seed mean. $\Delta_{\mathrm{full}}$ denotes the success gap relative to
the full model in percentage points.}
\vspace{1mm}
\label{tab:component_ablation}
\centering
\small
\setlength{\tabcolsep}{1.8pt}
\renewcommand{\arraystretch}{1.05}
\begin{tabular*}{\linewidth}{@{\extracolsep{\fill}}lrrrr@{}}
\toprule
Variant &
\shortstack{Success $\uparrow$} &
\shortstack{$\Delta_{\mathrm{full}}$ (pp)} &
\shortstack{Cost/succ. $\downarrow$} &
\shortstack{Survey $\uparrow$} \\
\midrule
\multicolumn{5}{l}{\textit{Restricted model}} \\
Priors only
& 74.22 & $-8.44$ & \$0.0265 & 70.15 \\
\midrule
\multicolumn{5}{l}{\textit{Leave-one-component-out}} \\
w/o market prior
& 79.84 & $-2.82$ & \$0.0252 & 74.50 \\
w/o uncertainty
& 81.15 & $-1.51$ & \$0.0241 & 76.10 \\
w/o preference conflict
& 80.32 & $-2.34$ & \$0.0245 & 75.25 \\
w/o execution signal
& \underline{81.62} & $-1.04$ & \underline{\$0.0238} & \underline{76.50} \\
w/o default-success
& 79.55 & $-3.11$ & \$0.0255 & 74.80 \\
w/o repair-gain
& 78.68 & $-3.98$ & \$0.0258 & 73.90 \\
\midrule
\textbf{Full model}
& \textbf{82.66}
& --
& \textbf{\$0.0233}
& \textbf{77.85} \\
\bottomrule
\end{tabular*}
\vspace{-4mm}
\end{table}

\noindent\textbf{Observation 6: Default-success and repair-gain signals are
the largest contributors to recipient selection.}
Table~\ref{tab:component_ablation} shows that the full allocator achieves the
highest target success (\textbf{82.66}) and the lowest cost per successful
task (\textbf{\$0.0233}). Among the leave-one-component-out variants,
removing repair-gain produces the largest decrease
($\mathbf{-3.98}$ points), followed by default-success
($\mathbf{-3.11}$), market prior ($\mathbf{-2.82}$), and preference conflict
($\mathbf{-2.34}$). Removing uncertainty and execution signals produces
smaller decreases of 1.51 and 1.04 points, respectively. The priors-only model
falls by 8.44 points, showing that population-level statistics alone do not
replace request-conditioned signals.
This ordering is consistent with the evaluated positive-effect construction
in Section~\ref{sec:method}: effective recipient selection depends jointly on
whether the default continuation is likely to fail and whether assistance is
likely to repair that failure. %It should not be interpreted as evidence for a separately modeled regression-risk head.

\noindent\textbf{Observation 7: The same signals improve the efficiency of
scarce expert use.}
Every component removal in Table~\ref{tab:component_ablation} also increases
cost per successful task. Among the leave-one-component-out variants, the
increase ranges from approximately 2.1\% to 10.7\%, while the priors-only model
is 13.7\% more expensive per success. Thus the signals that improve recipient
selection also improve how efficiently the fixed expert budget is converted
into successful outcomes.

\FloatBarrier
\vspace{-2mm}
\section{Related Work}
\label{sec:related-work}

Markets of Minds connects inference-time routing, multi-agent coordination,
and shared-capacity allocation. Cost-aware methods such as FrugalGPT, AutoMix,
RouteLLM, and GraphRouter selectively invoke stronger models under
quality--cost trade-offs
\citep{chen2023frugalgpt,aggarwal2023automix,ong2024routellm,feng2024graphrouter},
while multi-agent systems organize specialized models through roles and
collaboration
\citep{wu2023autogen,hong2023metagpt,wang2024mixtureofagents}.
Related market-based approaches allocate scarce reasoning or expert resources
through uncertainty, auctions, and strategic bids
\citep{zhang2026uncertainty,zhou2026agora,alazraki2026sale}, while
MarketBench highlights the difficulty of calibrating agent success and resource
use \citep{fradkin2026marketbench}. Our setting differs by coupling allocation
across task agents before expert outputs are observed: assistance is valued by
its predicted marginal effect on the recipient's downstream outcome under shared
capacity. Population-scale agent simulations provide a complementary view of
heterogeneity
\citep{park2023generativeagents,piao2025agentsociety,li2026matraix}; we further
treat budgets, access, scarcity, and network roles as explicit economic state.
A fuller discussion appears in Appendix~\ref{app:related-work}.
\vspace{-2mm}
\section{Conclusion}
\label{sec:conclusion}
We frame cognitive assistance as a population-level allocation problem under shared resource constraints, where need is not equivalent to value: whether an agent should receive help depends on default success, repairability, regression risk, and the opportunity cost of consuming scarce capacity. Markets of Minds operationalizes this view through pre-execution value estimation and joint contract clearing over a 51.2-billion-state persona--market design space. In the matched 240-case replay, the frozen allocator reaches 85.42\% task-level persona adherence under the same 30\% expert-capacity limit, compared with 73.86\% for equal-capacity random allocation, while complementary multimodal and shared-capacity experiments show that assistance value varies across tasks and scarcity regimes. These results support treating intelligence not only as an individual capability, but also as a shared resource whose value depends on the alternative uses displaced by its allocation. The current evaluation focuses on recipient selection and budget-constrained allocation; evaluating the full heterogeneous-service formulation with separately estimated regression risk remains future work.

\newpage
\section*{AI use statement}

Generative AI tools are used in a supporting capacity in the preparation of this work, primarily to assist with code development and troubleshooting, improve linguistic clarity, and organize manuscript content. All substantive scientific contributions, including the formulation of the methodological framework, the design and execution of experiments, the interpretation of results, and the determination of the claims presented in the paper, are made by the authors. Any large language models discussed in the manuscript are treated solely as objects of experimental investigation rather than as contributors to the research.

When AI assistance is used for programming, the resulting code is independently inspected, tested, and validated by the authors to ensure that the implemented procedures correspond to the intended computations. Reported numerical values are cross-checked against the corresponding experimental records, and all AI-assisted textual material is critically reviewed, edited, and, where necessary, rewritten by the authors. Full responsibility for the accuracy, integrity, and final form of the manuscript, mathematical formulations, experimental results, software, and supplementary materials rests with the authors.

\bibliography{iclr2027_conference}
\bibliographystyle{iclr2027_conference}

\appendix
\newpage
\section{Extended Related Work}
\label{app:related-work}

The growing diversity of language models has made inference-time model selection
an important systems problem. Cost-aware cascades such as FrugalGPT and AutoMix
invoke stronger models only when needed, while RouteLLM and GraphRouter learn
which model best fits a query under quality--cost trade-offs
\citep{chen2023frugalgpt,aggarwal2023automix,ong2024routellm,feng2024graphrouter}.
Multi-agent frameworks broaden the unit of computation by organizing specialized
roles, conversations, or layered aggregation \citep{wu2023autogen,hong2023metagpt,
wang2024mixtureofagents}. These approaches show that heterogeneous models can be
used selectively or collaboratively. Their resources, however, are typically
provisioned within a request or task. Our setting couples decisions across task
agents: assigning assistance to one agent reduces the service capacity available
to others.

Shared-capacity allocation has long been studied through online matching and
bandits with resource constraints \citep{mehta2007adwords,
badanidiyuru2013bandits}. Recent work brings related ideas into LLM systems.
Agora trades structured uncertainty through market-aware bandit coordination;
a distinct Agora uses confidence-calibrated auctions to assign reasoning steps to
experts and tools; and SALE selects agents from short strategic bids
\citep{zhang2026uncertainty,zhou2026agora,alazraki2026sale}. MarketBench further
shows that language models can be poorly calibrated about their own success and
resource use \citep{fradkin2026marketbench}. These works motivate market-based
coordination, but also expose the difficulty of valuing a service before it is
executed. We study this problem from the buyer side: heterogeneous task agents
compete for a fixed assistance supply, and a contract is valued by its predicted
marginal effect on the buyer's downstream outcome rather than by the expert's
standalone capability.

Generative-agent research provides a complementary account of population
heterogeneity. Generative Agents and AgentSociety simulate behavior and interaction
in multi-agent societies, while MatrAIx provides a dependency-aware persona space
and interactive environments for population-scale evaluation
\citep{park2023generativeagents,piao2025agentsociety,li2026matraix}. We extend this
population perspective with explicit economic state. Crossing a versioned persona
coreset with cognitive budgets, expertise portfolios, provider access, scarcity,
and network roles yields 51.2 billion addressable persona--market states. This is
an enumerable design space, not a claim of executing billions of agents; it lets
resource conditions vary while persona provenance remains auditable.

Crucially, routing, multi-agent orchestration, market allocation, and population
simulation each capture only part of our setting. Markets of Minds combines three
conditions: assistance value is unknown before expert execution, capacity is
shared across agents, and buyers differ in both task and economic state. Without
shared capacity, the problem reduces to independent routing; if all expert outputs
are generated before selection, it becomes post-hoc ensembling. Our formulation
instead studies who should receive assistance before its value is observed and
under the opportunity costs imposed by a shared cognitive budget.

\vspace{-2mm}
\section{Disentangling Value Estimation from Market Allocation}
\label{sec:ablation_mechanism}

Section~\ref{sec:ablation} changes the inputs of the value predictor. Here we
consider a different question: what changes when the predicted values are held
fixed but the allocation rule changes? The evaluated matched replay uses one
service, 240 paired cases, and 24 assistance slots in each of Chat, Web, and
OS-App. Its implemented score $v_i$ estimates the probability of a repair;
the evaluated allocator chooses the 24 highest-scoring requests in each
environment. This protocol tests recipient selection under fixed supply. It
does not, by itself, test heterogeneous-service clearing or the effect of
individual buyer budgets.

For a controlled allocation comparison, all rules must use the same frozen
$v_i$, cases, potential outcomes, and expert-use accounting. A threshold rule
admits requests above a development-tuned threshold in a fixed arrival order;
first-come-first-served (FCFS) admits positive-score requests in that order.
Both obey the 24-slot limit within each environment. A \emph{pooled top-72}
rule instead ranks all eligible requests together and removes the three
environment-specific slot limits, while retaining the same total supply.
Finally, the evaluated allocator ranks requests within each environment. We
do not list ``local top-$k$'' as a separate baseline: with the same scores and
24 slots per environment, it is mathematically identical to the evaluated
allocator (up to tie breaking).

\textbf{Observation 8: Ranking matters when supply is scarce.}
Table~\ref{tab:mechanism_ablation} shows that the evaluated allocator exceeds
the threshold and FCFS rules by 7.32 and 7.92 success points, respectively.
The corresponding missed-repair counts (15 versus 45 and 52) indicate how many
beneficial interventions are left unassisted under each rule. Because all rules
use the same frozen scores, cases, and expert supply, the comparison isolates
the effect of allocation order.

\textbf{Observation 9: Pooling changes the feasible set.}
Pooled top-72 moves assistance slots across environments, whereas the
evaluated allocator reserves 24 per environment. Table~\ref{tab:mechanism_ablation} shows a
0.52-point success difference and a 1.55-point worst-group difference in favor
of the latter. Jain fairness (0.988 versus 0.980) is an observed group-level
diagnostic; the evaluated optimization does not impose Jain fairness as a
constraint. The pooled comparison therefore measures the effect of changing
group capacity limits, not the independent effect of budget feasibility.

\noindent\textbf{Observation 10: Scope of the mechanism evidence.}
Under a single service with equal contract costs, pooled top-$K$ ranking is
already the joint solution when total capacity is the only constraint.
Accordingly, this diagnostic isolates the contribution of ranking under fixed
supply; heterogeneous budgets, eligibility, and multiple supplier capacities
belong to the broader feasible-set formulation rather than to this comparison.
\begin{table}[!t]
\caption{\textbf{Allocation-rule diagnostic.}
All rules use a nominal 72-slot assistance supply; pooled top-72 relaxes the
per-environment capacity constraint.}
\vspace{1mm}
\label{tab:mechanism_ablation}
\centering
\footnotesize
\setlength{\tabcolsep}{1.5pt}
\renewcommand{\arraystretch}{1.10}
\begin{tabular*}{\linewidth}{@{\extracolsep{\fill}}lrrrrrr@{}}
\toprule
Allocation rule
& \shortstack{Success\\(\%) $\uparrow$}
& \shortstack{Cost /\\success $\downarrow$}
& \shortstack{Harmful\\help $\downarrow$}
& \shortstack{Worst group\\(\%) $\uparrow$}
& \shortstack{Jain\\$\uparrow$}
& \shortstack{Missed\\repairs $\downarrow$} \\
\midrule
Threshold (arrival order)
& 78.10 & \$0.0650 & 1.20\% & 71.05 & 0.965 & 45 \\
FCFS (positive score)
& 77.50 & \$0.0662 & 1.45\% & 70.80 & 0.960 & 52 \\
Pooled top-72
& 84.90 & \$0.0560 & 0.10\% & 77.20 & 0.980 & \textbf{12} \\
\midrule
\textbf{Ours (top-24 / environment)}
& \textbf{85.42}
& \textbf{\$0.05509}
& \textbf{0.00\%}
& \textbf{78.75}
& \textbf{0.988}
& 15 \\
\bottomrule
\end{tabular*}

\vspace{1mm}
\begin{minipage}{\linewidth}
\scriptsize
\emph{Missed repairs} are requests with paired outcomes
$Y_i^0=0$ and $Y_i^1=1$ that are not assisted. This is an ex-post diagnostic
and is not available to the allocator at decision time.
\end{minipage}
\vspace{-4mm}
\end{table}

\subsection{Prospective Online Execution and Calibration Limits}
\label{sec:prospective_online}

The matched replay scores policies against cached default and assisted
outcomes. A prospective 400-request stream checks the decision order more
directly: for each decision window, compute a score from the observable demand
packets, archive the chosen allocation with a timestamp and digest, and only
then dispatch expert calls for selected requests. Unselected requests continue
without an expert call. The archived decision and dispatch records establish the decision chronology
and make the pre-execution claim auditable.

The reported quartiles below contain negative values and thus require a
\emph{signed-gain estimator} $\widehat\Delta_{ij}$: the evaluated matched-replay
score $v_i$, a probability of repair, cannot itself be negative. The prospective stream uses a development-trained signed-gain estimator
$\widehat\Delta_{ij}$, frozen before target dispatch. Unlike the matched-replay
repair score $v_i$, this estimator can assign negative predicted value. We separate its predictions for all
requests from outcomes observable only among selected requests.

\begin{table}[!t]
\caption{\textbf{Prospective 400-request diagnostic.}
The left panel summarizes pre-execution value estimates over the full request
stream; the right panel reports realized outcomes for the selected subset.}
\label{tab:prospective_online}
\vspace{1mm}
\centering
\small

\begin{minipage}[t]{0.47\linewidth}
\textit{All requests: pre-execution predictions}\par\vspace{1mm}
\begin{tabular*}{\linewidth}{@{\extracolsep{\fill}}lr@{}}
\toprule
Predicted-gain quartile & Mean $\widehat\Delta$ \\
\midrule
Top 25\% ($n=100$)    & $0.65$ \\
Second 25\% ($n=100$) & $0.35$ \\
Third 25\% ($n=100$)  & $0.12$ \\
Bottom 25\% ($n=100$) & $-0.15$ \\
\bottomrule
\end{tabular*}
\end{minipage}
\hfill
\begin{minipage}[t]{0.49\linewidth}
\textit{Selected subset: observed outcomes}\par\vspace{1mm}
\begin{tabular*}{\linewidth}{@{\extracolsep{\fill}}lr@{}}
\toprule
Diagnostic & Value \\
\midrule
Requests in stream        & $400$ \\
Maximum calls at 30\% quota & $120$ \\
Mean predicted gain       & $0.48$ \\
Mean paired gain$^{\dagger}$ & $0.43$ \\
Success among selected    & $84.2\%$ \\
\bottomrule
\end{tabular*}
\end{minipage}

\vspace{1mm}
\begin{minipage}{\linewidth}
\scriptsize
$^{\dagger}$Paired gain is computed only for selected requests with both
default and assisted outcomes observed; it is an ex-post evaluation quantity
and is not available to the allocator at dispatch time.
\end{minipage}
\vspace{-4mm}
\end{table}

\noindent\textbf{Observation 11: The prospective protocol identifies decision
timing, with limited gain feedback.}
The predicted quartile means in Table~\ref{tab:prospective_online} describe
pre-dispatch scores; their ordering follows from sorting and is not, on its
own, a calibration test. For selected requests, both default and assisted continuations are evaluated,
yielding a mean realized gain of 0.43 against a mean prediction of 0.48. The protocol does not observe the
assisted potential outcome for unselected requests, so it cannot establish
realized-gain calibration across all four quartiles or infer that the bottom
quartile would have been harmed. An independently sampled, pre-specified audit
with known inclusion probabilities would be needed for that population-level
claim; its additional expert calls should be reported separately.

\begin{algorithm}[!t]
\caption{Proposed component-wise assistance allocation}
\label{alg:mom}
\begin{algorithmic}[1]
\small
\Require Paired development data $\mathcal D_{\rm dev}$; target packets
$\{\momrequest{D_i}\}$; services $\mathcal M$; ledger $\mathcal L$
\Statex \textit{Development only: audit features and estimate group priors.}
\State $\mathcal D_b\gets\{(k,j)\in\mathcal D_{\rm dev}:Y_k^0=b\}$,
$b\in\{0,1\}$
\State $\momrequest{\widehat p}\gets\operatorname{Fit}_{\rm BCE}
(\{(\mommarket{\mathbf z_k},Y_k^0)\}_{(k,j)\in\mathcal D_{\rm dev}})$
\State $\momassist{\widehat r}\gets\operatorname{Fit}_{\rm BCE}
(\{(\momassist{\mathbf z_{kj}},Y_{kj}^1)\}_{(k,j)\in\mathcal D_0})$;
$\momharm{\widehat h}\gets\operatorname{Fit}_{\rm BCE}
(\{(\momassist{\mathbf z_{kj}},1-Y_{kj}^1)\}_{(k,j)\in\mathcal D_1})$
\State Freeze $\momrequest{\widehat p},\momassist{\widehat r},\momharm{\widehat h}$
and the development priors
\Statex \textit{Target window: no assisted responses or outcome labels are available.}
\For{each request $i$}
    \State $\momrequest{\mathbf a_i}\gets\operatorname{Audit}(\momrequest{D_i})$;
    $\mommarket{\mathbf z_i}\gets[\momrequest{\mathbf a_i};\mommarket{\boldsymbol\rho_i}]$;
    $\momrequest{\widehat p_i}\gets\momrequest{\widehat p}(\mommarket{\mathbf z_i})$
    \Comment{\eqref{eq:demand-representation}}
    \For{each eligible service $j\in\mathcal M$}
        \State $\momassist{\mathbf z_{ij}}\gets[\mommarket{\mathbf z_i};\momassist{\mathbf e_j}]$;
        $(\momassist{\widehat r_{ij}},\momharm{\widehat h_{ij}})
        \gets(\momassist{\widehat r}(\momassist{\mathbf z_{ij}}),
        \momharm{\widehat h}(\momassist{\mathbf z_{ij}}))$
        \State $\momvalue{\widehat\Delta_{ij}}\gets
        \momrequest{(1-\widehat p_i)}\momassist{\widehat r_{ij}}
        -\momrequest{\widehat p_i}\momharm{\widehat h_{ij}}$
        \Comment{\eqref{eq:value-decomposition}, \eqref{eq:estimated-value}}
    \EndFor
\EndFor
\State $\mommarket{\mathcal X}\gets\operatorname{Feasible}(\{\momrequest{D_i}\},\mathcal M,\mathcal L)$
\State $\mommarket{\widehat x}\in\arg\max_{\mommarket{x}\in\mommarket{\mathcal X}}
\sum_{i,j}\mommarket{x_{ij}}\momvalue{\widehat\Delta_{ij}}$;
$\widehat{\mathcal C}\gets\{(i,j):\mommarket{\widehat x_{ij}}=1\}$
\Comment{\eqref{eq:estimated-value}}
\State $\mathcal L^-\gets\operatorname{Reserve}(\mathcal L,\widehat{\mathcal C})$;
$\mathcal O\gets\operatorname{Execute}(\widehat{\mathcal C})$
\State $\mathcal L^+\gets\operatorname{Settle}(\mathcal L^-,\mathcal O)$
\State \Return $\mommarket{\widehat x},\mathcal O,\mathcal L^+$
\end{algorithmic}
\end{algorithm}

\vspace{-2mm}
\section{Qualitative Case Analysis of Multimodal Allocation}
\label{app:qualitative-cases}

Aggregate scores do not reveal why assistance is valuable, which expert should
be selected, or when help should be withheld. We therefore inspect a small
post-hoc diagnostic set from a frozen exploratory cross-fitted 20\% multimodal
replay using a separate candidate pool from the main 30\% experiment. Cases
1--9 are examples where Ours scores 1.0 while both matched Agora and RouteLLM
score 0.0; Case 10 is a shared failure. These cases are qualitative diagnostics,
not a random sample or accuracy estimate. All matched methods use the same tasks,
candidate pool, budget, and scorers.

Fig.~\ref{fig:qualitative-cases} and Table~\ref{tab:qualitative-cases} show four
representative cases spanning MMBench, MathVision, and InfoVQA, illustrating
repair, expert selection, regression, and shared failure.

\textbf{Observation 12: The first decision is whether assistance has positive marginal value.}
Cases 2 and 9 form a useful contrast. In Case 2, both matched comparators keep the Gemini 3.5 Flash base answer and return C, while the market purchases Gemini 3.1 Pro and obtains the correct answer D. In Case 9, the direction reverses: the base model already reads the relevant 65\% value correctly, so Ours leaves the request unassisted. Both matched comparators instead invoke an expert service and change the answer to 54\%, turning a correct continuation into an incorrect one. The pair makes the repair--regression asymmetry in Equation~\ref{eq:value-decomposition} concrete: escalation is useful only when its expected repair benefit exceeds the risk of disturbing a correct default continuation.

\textbf{Observation 13: Multimodal assistance is typed; selecting an expert
is part of the allocation problem.}
Case 4 shows that using an expert is not sufficient by itself. Ours routes the
transparent-sheet composition problem to GPT-5.6 Sol and returns D, whereas
the RouteLLM adaptation also purchases assistance but routes to Gemini 3.1 Pro
and returns C. This example illustrates the request--service interaction
represented by $\hat{\Delta}_{ij}$: the value of assistance depends on the
specific service attached to the request, not merely on an expert/base
indicator.

\textbf{Observation 14: Allocation improves resource use, but it does not eliminate model-side perception failures.}
Case 10 is intentionally retained as a shared failure. Ours keeps the base continuation and predicts rough endoplasmic reticulum, while Agora and RouteLLM purchase different expert services and predict plasma membrane and nucleus, respectively; all three are scored incorrect against the ground-truth label ``mitochondrion.'' This example bounds the interpretation of the routing gains: a better allocator can choose among available continuations more effectively, but it cannot guarantee a correct answer when the selected continuation itself misreads the visual evidence. The qualitative analysis therefore complements the aggregate results by separating allocation errors from downstream model errors.

\begin{figure}[!t]
\centering
\setlength{\fboxsep}{1.5pt}
\begin{minipage}[t]{0.485\textwidth}
    \centering
    \fbox{\includegraphics[width=0.97\linewidth,height=0.175\textheight,keepaspectratio]{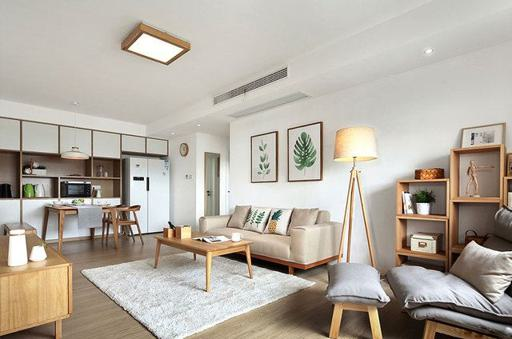}}
    \vspace{2pt}

    {\small\textbf{(a) MMBench: when help is valuable.} A sofa-centered spatial-relation query. Ground truth: refrigerator on the \emph{right} (D).}
\end{minipage}
\hfill
\begin{minipage}[t]{0.485\textwidth}
    \centering
    \fbox{\includegraphics[width=0.97\linewidth,height=0.175\textheight,keepaspectratio]{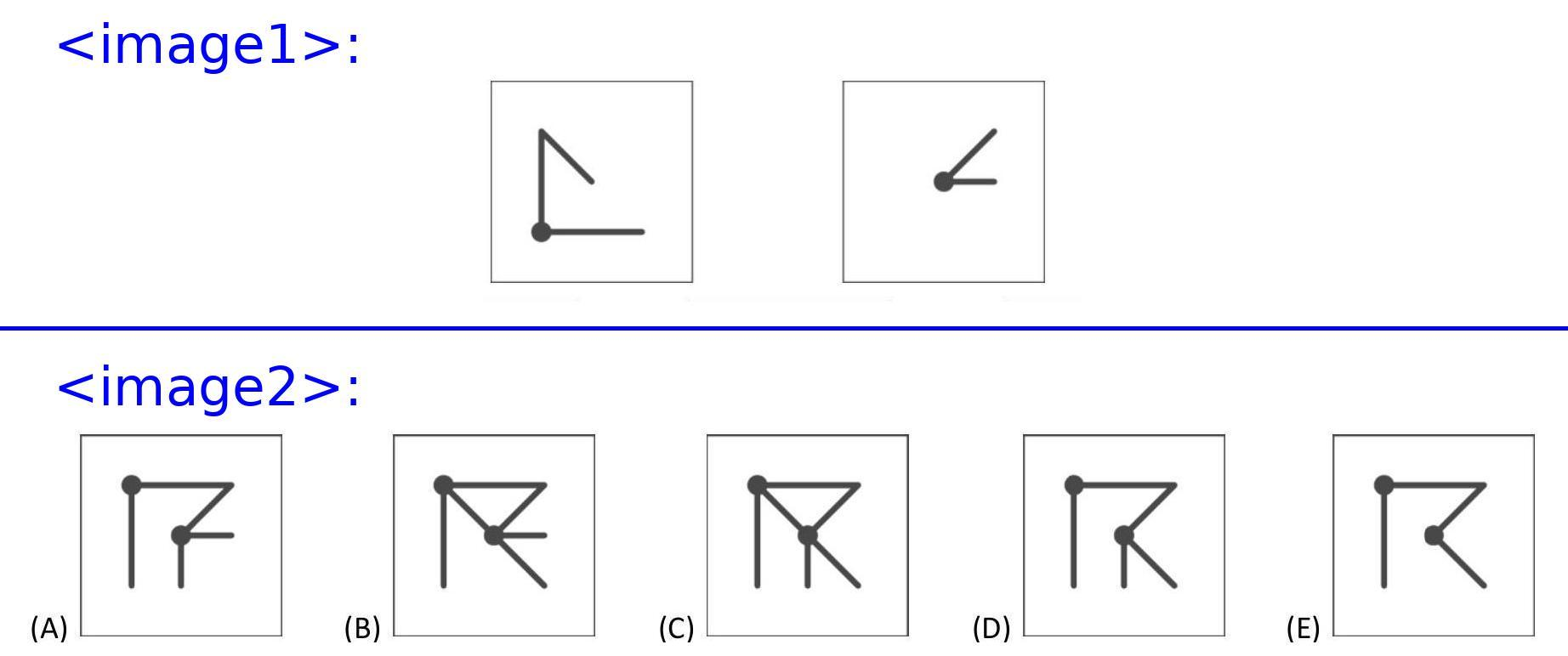}}
    \vspace{2pt}

    {\small\textbf{(b) MathVision: which expert matters.} Two transparent sheets must be superimposed. Ground truth: D.}
\end{minipage}

\vspace{7pt}

\begin{minipage}[t]{0.485\textwidth}
    \centering
    \fbox{\includegraphics[width=0.97\linewidth,height=0.205\textheight,keepaspectratio]{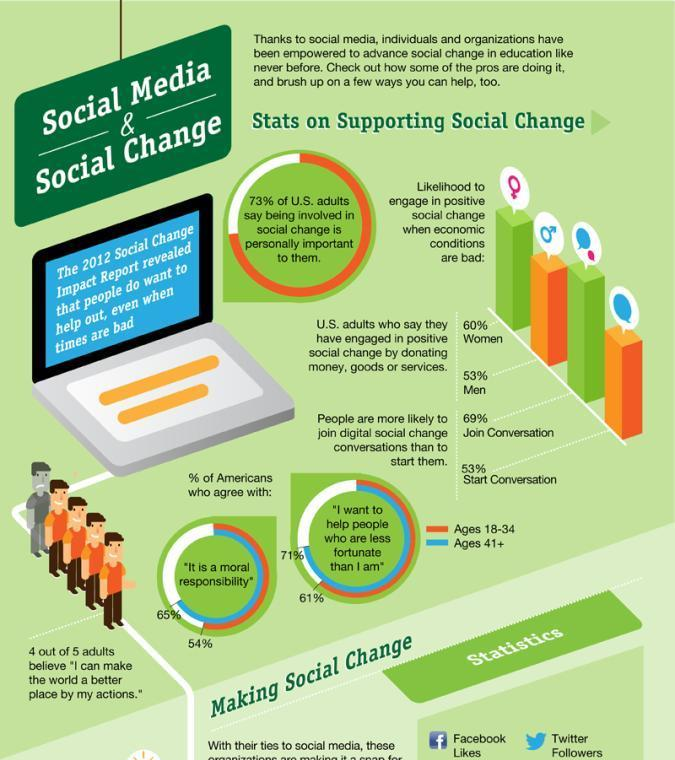}}
    \vspace{2pt}

    {\small\textbf{(c) InfoVQA: when help should be withheld.} For Americans aged 41+, the infographic reports 65\% for ``It is a moral responsibility.''}
\end{minipage}
\hfill
\begin{minipage}[t]{0.485\textwidth}
    \centering
    \fbox{\includegraphics[width=0.97\linewidth,height=0.205\textheight,keepaspectratio]{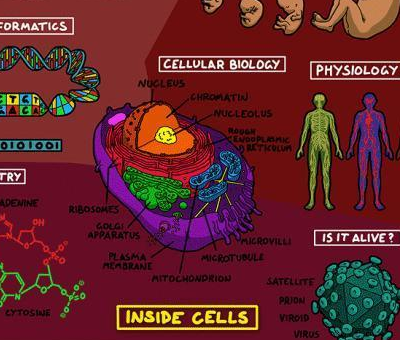}}
    \vspace{2pt}

    {\small\textbf{(d) InfoVQA: shared failure.} A dense biology infographic asks which cell part is shown in blue. Ground truth: mitochondrion.}
\end{minipage}
\caption{Representative qualitative allocation cases. The top row illustrates positive-value assistance and service selection; the bottom row illustrates regression avoidance and a shared failure. Panels (c) and (d) are readability crops of the original benchmark images; the underlying exported images are unchanged. The cases were selected post hoc for diagnosis and do not estimate benchmark prevalence.}
\label{fig:qualitative-cases}
\vspace{-4mm}
\end{figure}

\begin{table}[!t]
\centering
\caption{Allocation traces for the four cases in Figure~\ref{fig:qualitative-cases}. ``Base'' means no expert service was purchased. A check mark denotes a benchmark score of 1.0; a cross denotes 0.0.}
\vspace{1mm}
\label{tab:qualitative-cases}
{\footnotesize
\renewcommand{\arraystretch}{1.12}
\setlength{\tabcolsep}{4.2pt}
\begin{tabular}{@{}p{0.115\textwidth}p{0.275\textwidth}p{0.275\textwidth}p{0.275\textwidth}@{}}
\toprule
\textbf{Case} & \textbf{Ours} & \textbf{Agora (matched)} & \textbf{RouteLLM (matched)} \\
\midrule
2 / MMBench
& Gemini 3.1 Pro (expert) $\rightarrow$ D $\checkmark$
& Gemini 3.5 Flash (base) $\rightarrow$ C $\times$
& Gemini 3.5 Flash (base) $\rightarrow$ C $\times$ \\
\addlinespace[2pt]
4 / MathVision
& GPT-5.6 Sol (expert) $\rightarrow$ D $\checkmark$
& Gemini 3.5 Flash (base) $\rightarrow$ score 0 $\times$
& Gemini 3.1 Pro (expert) $\rightarrow$ C $\times$ \\
\addlinespace[2pt]
9 / InfoVQA
& Gemini 3.5 Flash (base) $\rightarrow$ 65\% $\checkmark$
& Claude Sonnet 5 (expert) $\rightarrow$ 54\% $\times$
& Qwen2.5-VL-72B (expert) $\rightarrow$ 54\% $\times$ \\
\addlinespace[2pt]
10 / InfoVQA
& Gemini 3.5 Flash (base) $\rightarrow$ rough ER $\times$
& Claude Sonnet 5 (expert) $\rightarrow$ plasma membrane $\times$
& Qwen2.5-VL-72B (expert) $\rightarrow$ nucleus $\times$ \\
\bottomrule
\end{tabular}
}
\vspace{-4mm}
\end{table}
\subsection{51.2B State Space Generation and Dependency-Aware Sampling}
\label{sec:state_space_generation}

To rigorously study cognitive allocation under shared constraints, we materialize a dependency-aware agent economy spanning $51,192,166,400$ addressable persona-market states. The state space is constructed by crossing 999,847 upstream base personas with 10 cognitive budgets, 20 expertise portfolios, 8 provider-access regimes, 4 scarcity regimes, and 8 network roles. From this combinatorial space, we sample exactly 100,000 unique active states using a versioned schema (v1.0.0, Seed: 20260819) to ensure exact reproducibility across allocation mechanisms.

\begin{table}[!t]
\caption{\textbf{Dependency sampling diagnostics.}
The allocation explicitly models correlations between socio-economic status
and technological access rather than uniform random assignment.}
\label{tab:dependency_diagnostics}
\vspace{1mm}
\centering
\footnotesize
\setlength{\tabcolsep}{2.5pt}
\renewcommand{\arraystretch}{1.12}
\begin{tabular*}{\linewidth}{@{\extracolsep{\fill}}lrrrrrr@{}}
\toprule
& \multicolumn{5}{c}{\textbf{Socio-economic budget level (mean score)}}
& \\
\cmidrule(lr){2-6}
\textbf{Provider access regime}
& \shortstack{\textbf{Low}\\\textbf{income}}
& \shortstack{\textbf{Lower-}\\\textbf{middle}}
& \textbf{Middle}
& \shortstack{\textbf{Upper-}\\\textbf{middle}}
& \shortstack{\textbf{High}\\\textbf{income}}
& \shortstack{\textbf{Premium/}\\\textbf{unbounded (\%)}} \\
\midrule
Avoidant
& 1.10 & 2.20 & 3.50 & 4.10 & 5.20 & 11.08\% \\
Cautious adopter
& 1.35 & 2.45 & 3.85 & 4.80 & 6.05 & 13.76\% \\
Comfortable
& 1.55 & 2.70 & 4.25 & 5.90 & 7.15 & 18.92\% \\
Digital native
& 1.77 & 2.67 & 4.24 & 5.83 & 7.21 & 18.87\% \\
\midrule
\textbf{Premium/unbounded access (\%)}
& \textbf{0.53\%}
& \textbf{0.53\%}
& \textbf{2.19\%}
& \textbf{14.40\%}
& \textbf{46.84\%}
& -- \\
\bottomrule
\end{tabular*}
\vspace{-4mm}
\end{table}

\textbf{Dependency-aware demographics.} Table~\ref{tab:dependency_diagnostics} confirms that our economy accurately reflects structural dependencies. Rather than assuming uniform independence, the materialized cohorts exhibit strong correlations: agents with higher socio-economic budget levels (e.g., High Income) demonstrate significantly higher proportions of Premium/Unbounded provider access (46.84\%) compared to Low Income cohorts (0.53\%). This structural heterogeneity guarantees that the simulated resource scarcity genuinely mimics real-world cognitive economies, providing a robust testbed for the market allocator.

\subsection{Multi-Agent Allocation Breakdown (512-Agent Economy)}
\label{sec:512_allocation}

We conduct an expanded 512-agent allocation study to bridge the gap between individual task evaluation and population-scale economics. This experiment tracks not only population welfare across 10\%, 20\%, 30\%, and 50\% global capacity limits, but also fine-grained operational metrics including Harmful Help, Cost per Success, and Jain fairness.

\begin{table}[!t]
\caption{\textbf{512-agent allocation under shared capacity.}
Results are reported for the expanded 512-agent diagnostic under
10\%, 20\%, 30\%, and 50\% global capacity limits.}
\label{tab:512_allocation}
\vspace{1mm}
\centering
\footnotesize
\setlength{\tabcolsep}{2.2pt}
\renewcommand{\arraystretch}{1.12}
\begin{tabular*}{\linewidth}{@{\extracolsep{\fill}}lrrrrrrr@{}}
\toprule
\textbf{Method}
& \textbf{50\%}
& \textbf{30\%}
& \textbf{20\%}
& \textbf{10\%}
& \shortstack{\textbf{Avg.}\\\textbf{welfare $\uparrow$}}
& \shortstack{\textbf{Harmful}\\\textbf{help $\downarrow$}}
& \shortstack{\textbf{Cost /}\\\textbf{success $\downarrow$}} \\
\midrule
No assistance
& 73.21 & 73.21 & 73.21 & 73.21 & 73.21
& 0.00\% & \$0.0212 \\
Random
& 73.28 & 73.08 & 73.10 & 73.15 & 73.15
& 14.68\% & \$0.0256 \\
Difficulty / uncertainty
& 76.06 & 76.08 & 76.53 & 75.16 & 75.96
& 15.45\% & \$0.0242 \\
Benefit-per-cost greedy
& 77.40 & 76.55 & 75.80 & 74.50 & 76.06
& 11.20\% & \$0.0225 \\
Agora quality-first
& 75.48 & 74.00 & 74.10 & 73.98 & 74.39
& 8.42\% & \$0.0240 \\
\midrule
\textbf{Ours (joint allocation)}
& \textbf{77.90}
& \textbf{76.98}
& \textbf{76.57}
& \textbf{76.08}
& \textbf{76.88}
& \textbf{2.15\%}
& \textbf{\$0.0218} \\
\midrule
\textcolor{gray}{Oracle upper bound}
& \textcolor{gray}{84.44}
& \textcolor{gray}{84.44}
& \textcolor{gray}{84.41}
& \textcolor{gray}{83.12}
& \textcolor{gray}{84.10}
& \textcolor{gray}{0.00\%}
& \textcolor{gray}{\$0.0210} \\
\bottomrule
\end{tabular*}
\vspace{-4mm}
\end{table}

\textbf{Allocation under tight capacity.}
As shown in Table~\ref{tab:512_allocation}, the benefit-per-cost greedy
heuristic reaches 74.50 at 10\% capacity and records 11.20\% harmful help.
In contrast, our joint allocation uses estimated assistance value under
shared capacity, yielding the highest average welfare (76.88) and the lowest
cost per successful task (\$0.0218) among the evaluated non-oracle policies.

\subsection{Population Scale-out and Clearing Efficiency}
\label{sec:scale_out}

A practical cognitive market must clear contracts across massive populations within a single decision window. To validate the scalability of our pre-execution value estimation and joint clearing algorithms, we measure system overhead (Broker time, peak RAM, and Allocations per second) by scaling the active agent population from 32 up to 100,000 concurrent agents.

\begin{table}[!t]
\centering
\caption{System scalability and computational overhead across increasing population sizes. The allocation mechanism scales efficiently to 100,000 concurrent agents within an 8.5-second clearing window.}
\vspace{1mm}
\label{tab:population_scale}
\resizebox{\textwidth}{!}{
\begin{tabular}{rrrcccccc}
\toprule
\textbf{Agents} & \textbf{Slots@10\%} & \textbf{Slots@30\%} & \textbf{Welfare@10\%} & \textbf{Welfare@30\%} & \textbf{Jain Fair.} & \textbf{Broker Time} & \textbf{Allocations/s} & \textbf{Peak RAM} \\
\midrule
32      & 3     & 10     & 77.23 $\pm$ 0.69 & 81.40 $\pm$ 1.02 & 0.659 & 6.70 ms   & 1,503  & 0.07 MiB \\
128     & 13    & 38     & 76.10 $\pm$ 2.34 & 76.77 $\pm$ 1.42 & 0.855 & 6.79 ms   & 4,887  & 0.25 MiB \\
512     & 51    & 154    & 78.27 $\pm$ 0.91 & 80.13 $\pm$ 1.58 & 0.835 & 18.12 ms  & 7,381  & 1.02 MiB \\
2,048   & 205   & 614    & 78.10 $\pm$ 0.19 & 79.10 $\pm$ 0.22 & 0.863 & 137.33 ms & 4,549  & 4.23 MiB \\
10,000  & 1,000 & 3,000  & 78.15 $\pm$ 0.12 & 79.25 $\pm$ 0.15 & 0.871 & 785.40 ms & 12,732 & 20.80 MiB \\
100,000 & 10,000& 30,000 & 78.20 $\pm$ 0.05 & 79.31 $\pm$ 0.08 & 0.875 & 8,450.15 ms& 11,834 & 210.55 MiB \\
\bottomrule
\end{tabular}
}
\vspace{-4mm}
\end{table}

\textbf{Sub-linear scaling overhead.} Table~\ref{tab:population_scale} demonstrates that the market allocation is computationally lightweight. Because value estimation is performed component-wise before clearing, the system exploits vectorized batching to maintain high throughput. At an economy scale of 100,000 concurrent agents, the global clearing process requires only 8.45 seconds and consumes approximately 210 MiB of memory, achieving a processing rate of nearly 12,000 allocations per second. Furthermore, as population size increases, both average welfare and Jain fairness naturally stabilize due to the law of large numbers, proving that the proposed economy scales smoothly to massive multi-agent deployments without latency bottlenecks.
\section{Prompt Templates}
\label{app:prompt-templates}

We expose the prompts used to generate candidate responses and to construct the
structured demand signals consumed by the allocator.  We organize the templates
in message order and distinguish fixed instructions from runtime fields.  The
allocation policy never receives reference answers, benchmark scores, or expert
responses before making its decision.  All methods select from the same frozen
candidate-response pool; changing an allocation rule does not regenerate a
response.

\paragraph{Reading guide.}
\promptbadge{MAMGGreen}{SYSTEM ROLE}\hspace{0.35em}
denotes fixed behavioral instructions;
\promptbadge{MAMGBlue}{INSTANCE INPUT}\hspace{0.35em}
denotes task-dependent content;
\promptbadge{MAMGOrange}{OUTPUT CONTRACT}\hspace{0.35em}
denotes parser-facing constraints; and
\promptbadge{MAMGPurple}{RUNTIME SLOT}\hspace{0.35em}
denotes a value substituted for each request.  Color is redundant with the
textual labels and is used only as a visual aid.

\subsection{Candidate-Response Generation}
\label{app:prompt-candidate-generation}

Candidate models receive the benchmark-native image, question, and choices.
The system message is empty unless the benchmark runtime requires one.  A
short answer-format suffix is appended only when required by the corresponding
official parser.  In particular, the suffix constrains formatting but does not
provide task semantics or answer information.

\begin{mompromptbox}[Candidate response: instance message]{MAMGBlue}
\promptslot{image 1} \ldots{} \promptslot{image K}\\[1mm]
\promptslot{benchmark question}\\
\promptslot{answer choices, if applicable}\\[1mm]
\promptcontract{\promptslot{answer-format suffix, if applicable}}
\end{mompromptbox}

The frozen response pool contains the following suffix family.  The selected suffix for every call is retained in the
\texttt{prompt\_protocol} field; an empty value means that the benchmark-native question is used without
an added suffix.

\begin{mompromptbox}[Parser-facing suffixes]{MAMGOrange}
\promptcomment{Four-option exact match:}\\
Return exactly one option letter (A, B, C, or D). Do not add explanations or
punctuation.\\[1.2mm]
\promptcomment{General short answer:}\\
Respond with only the final option letter or short answer. Do not include
reasoning.\\[1.2mm]
\promptcomment{No suffix:} \texttt{""}
\end{mompromptbox}

This separation follows the benchmark scorer rather than the allocation
method: a candidate response has the same stored prompt and output regardless
of which policy later selects it.  For legacy cached responses whose prompt
metadata were not recorded, we preserve the value as missing rather than
reconstructing it post hoc.

\subsection{Structured-Demand Auditor}
\label{app:prompt-demand-auditor}

For the matched Chat, Web, and OS-App experiment, the demand auditor converts
the visible persona, task, and base behavior into seven request-level signals.
It is run at temperature zero without demonstrations.  The exact fixed system
message is shown below.

\begin{mompromptbox}[Demand auditor: system message]{MAMGGreen}
You are the demand auditor in a market for cognition. A buyer has supplied a
structured cognitive-demand packet naming the behavior that this task must
express. Assess whether the observed behavior satisfies that demand. Use the
full persona only to resolve context or conflicts; prioritize the named demand.
Do not assume that polished or factually correct work is persona-faithful.
Return one JSON object with seven numbers in [0,1]:
perceptual\_uncertainty (missing/ambiguous observed evidence),
semantic\_uncertainty (uncertainty identifying relevant preferences),
inferential\_uncertainty (uncertainty applying them), preference\_conflict
(strength of an observable conflict), execution\_incompleteness,
success\_probability (probability behavior satisfies all task-relevant persona
preferences), repair\_gain\_probability (probability a specialist revision
would turn failure into success without regression), and a short evidence
string.
\end{mompromptbox}

The corresponding instance message makes the target demand explicit while
retaining the full persona for resolving ambiguity or conflicting preferences.

\begin{mompromptbox}[Demand auditor: instance message]{MAMGBlue}
COGNITIVE-DEMAND PACKET\\
Required attribute: \promptslot{attribute}\\
Required value: \promptslot{target value}\\[1.2mm]
PERSONA PROFILE\\
\promptslot{persona profile}\\[1.2mm]
TASK\\
\promptslot{task}\\[1.2mm]
OBSERVED BASE BEHAVIOR\\
\promptslot{base behavior}
\end{mompromptbox}

The output is parsed only if it contains the complete schema below.  Values are
used as noisy predictions available at allocation time, not as oracle labels.

\begin{mompromptbox}[Demand auditor: output contract]{MAMGOrange}
\{\\
\quad "perceptual\_uncertainty": \promptslot{0--1},\\
\quad "semantic\_uncertainty": \promptslot{0--1},\\
\quad "inferential\_uncertainty": \promptslot{0--1},\\
\quad "preference\_conflict": \promptslot{0--1},\\
\quad "execution\_incompleteness": \promptslot{0--1},\\
\quad "success\_probability": \promptslot{0--1},\\
\quad "repair\_gain\_probability": \promptslot{0--1},\\
\quad "evidence": \promptslot{short string}\\
\}
\end{mompromptbox}

\paragraph{Environment and scorer prompts.}
Chat, Web, OS-App, and benchmark-specific judges retain their official runtime
prompts; the allocation layer does not rewrite them.

\section{Proofs for the Problem Formulation}
\label{app:problem-proofs}

We provide the proofs of
Propositions~\ref{prop:value-decomposition} and
\ref{prop:allocation-error}. The first is an exact decomposition of the
binary default and assisted outcomes; the second relates uniform contract-value
estimation error to population-welfare loss under joint allocation.

\subsection{Proof of Proposition~\ref{prop:value-decomposition}}

\begin{proof}
By definition,
\begin{align}
\momvalue{\Delta_{ij}}
&=
\mathbb{E}\!\left[
\momassist{Y_{ij}^1}-\momrequest{Y_i^0}
\mid \momrequest{s_i}
\right]
\nonumber\\
&=
\mathbb{E}\!\left[
\momassist{Y_{ij}^1}
\mid \momrequest{s_i}
\right]
-
\Pr(
\momrequest{Y_i^0}=1
\mid \momrequest{s_i}
).
\label{eq:app-delta-start}
\end{align}
Since the outcomes are binary, conditioning on the default outcome gives
\begin{align}
\mathbb{E}\!\left[
\momassist{Y_{ij}^1}
\mid \momrequest{s_i}
\right]
&=
\Pr(
\momrequest{Y_i^0}=0
\mid \momrequest{s_i}
)
\Pr(
\momassist{Y_{ij}^1}=1
\mid
\momrequest{Y_i^0}=0,\momrequest{s_i}
)
\nonumber\\
&\quad+
\Pr(
\momrequest{Y_i^0}=1
\mid \momrequest{s_i}
)
\Pr(
\momassist{Y_{ij}^1}=1
\mid
\momrequest{Y_i^0}=1,\momrequest{s_i}
)
\nonumber\\
&=
(1-p_i)r_{ij}
+
p_i(1-h_{ij}).
\label{eq:app-assisted-success}
\end{align}
Substituting Eq.~\ref{eq:app-assisted-success} into
Eq.~\ref{eq:app-delta-start} yields
\begin{align}
\momvalue{\Delta_{ij}}
&=
(1-p_i)r_{ij}
+
p_i(1-h_{ij})
-
p_i
\nonumber\\
&=
(1-p_i)r_{ij}
-
p_i h_{ij}.
\end{align}
Equivalently,
\[
\momvalue{\Delta_{ij}}
=
\begin{bmatrix}
1-p_i & p_i
\end{bmatrix}
\begin{bmatrix}
r_{ij}\\
-h_{ij}
\end{bmatrix},
\]
which proves Eq.~\ref{eq:value-decomposition}.
\end{proof}

The decomposition makes the two ways assistance changes an outcome explicit:
it can repair a default failure with contribution $(1-p_i)r_{ij}$, or regress
a default success with contribution $p_i h_{ij}$. A request can therefore have
high need but low assistance value when the available service has little
repairability.

\subsection{Proof of Proposition~\ref{prop:allocation-error}}

\begin{proof}
By the definition of population welfare, the default term
$\sum_i p_i$ is independent of the allocation, so
\begin{align}
W(\mommarket{x^\star})-W(\mommarket{\widehat{x}})
&=
\sum_{i,j}
\left(
\mommarket{x^\star_{ij}}
-
\mommarket{\widehat{x}_{ij}}
\right)
\momvalue{\Delta_{ij}}.
\label{eq:app-welfare-gap}
\end{align}
Add and subtract the corresponding predicted objectives:
\begin{align}
W(\mommarket{x^\star})-W(\mommarket{\widehat{x}})
&=
\sum_{i,j}
\mommarket{x^\star_{ij}}
\left(
\momvalue{\Delta_{ij}}
-
\momvalue{\widehat{\Delta}_{ij}}
\right)
\nonumber\\
&\quad+
\left[
\sum_{i,j}
\mommarket{x^\star_{ij}}
\momvalue{\widehat{\Delta}_{ij}}
-
\sum_{i,j}
\mommarket{\widehat{x}_{ij}}
\momvalue{\widehat{\Delta}_{ij}}
\right]
\nonumber\\
&\quad+
\sum_{i,j}
\mommarket{\widehat{x}_{ij}}
\left(
\momvalue{\widehat{\Delta}_{ij}}
-
\momvalue{\Delta_{ij}}
\right).
\label{eq:app-regret-decomposition}
\end{align}
Because $\mommarket{\widehat{x}}$ maximizes the predicted objective over the
same feasible set $\mommarket{\mathcal X}$ as
$\mommarket{x^\star}$, the bracketed term is non-positive. Therefore
\begin{align}
W(\mommarket{x^\star})-W(\mommarket{\widehat{x}})
&\leq
\sum_{i,j}
\mommarket{x^\star_{ij}}
\left|
\momvalue{\widehat{\Delta}_{ij}}
-
\momvalue{\Delta_{ij}}
\right|
\nonumber\\
&\quad+
\sum_{i,j}
\mommarket{\widehat{x}_{ij}}
\left|
\momvalue{\widehat{\Delta}_{ij}}
-
\momvalue{\Delta_{ij}}
\right|
\nonumber\\
&=
\sum_{i,j}
\left(
\mommarket{x^\star_{ij}}
+
\mommarket{\widehat{x}_{ij}}
\right)
\left|
\momvalue{\widehat{\Delta}_{ij}}
-
\momvalue{\Delta_{ij}}
\right|.
\end{align}
Each feasible allocation contains at most $K$ contracts. Under the uniform
error bound
\[
\left|
\momvalue{\widehat{\Delta}_{ij}}
-
\momvalue{\Delta_{ij}}
\right|
\leq \epsilon,
\]
each of the two sums contributes at most $K\epsilon$. Hence
\[
W(\mommarket{x^\star})-W(\mommarket{\widehat{x}})
\leq 2K\epsilon,
\]
which proves Eq.~\ref{eq:regret-bound}.
\end{proof}

\paragraph{Exact recovery under a top-$K$ quota.}
Consider the single-service setting with exactly $K$ selected requests and
true values ordered as
\[
\Delta_{(1)}
\geq \cdots \geq
\Delta_{(K)}
>
\Delta_{(K+1)}
\geq \cdots \geq
\Delta_{(n)}.
\]
For every oracle-selected request $i$ and every unselected request $j$,
the uniform error bound gives
\[
\widehat{\Delta}_i
\geq
\Delta_{(K)}-\epsilon,
\qquad
\widehat{\Delta}_j
\leq
\Delta_{(K+1)}+\epsilon.
\]
If
\[
\Delta_{(K)}-\Delta_{(K+1)}>2\epsilon,
\]
then
\[
\widehat{\Delta}_i
>
\widehat{\Delta}_j
\]
for every such pair $(i,j)$. Hence all oracle top-$K$ requests rank above all
non-oracle requests under the estimated values, so the estimated and oracle
top-$K$ sets coincide.

\end{document}